\newcommand\abs[1]{\lvert #1\rvert}
\newtheorem{THM}{Theorem}[section]
\newtheorem*{THMplanar}{Theorem~\ref{thm:mainplanar}}
\newtheorem*{THMbipartite}{Theorem~\ref{thm:mainbipartite}}
\newtheorem*{THModdbipartite}{Theorem~\ref{thm:mainoddbipartite}}
\newtheorem{LEM}[THM]{Lemma}
\newtheorem{PROP}[THM]{Proposition}
\newtheorem{PROB}{Problem}
\theoremstyle{remark}
\theoremstyle{definition}
\newcommand{\pcf}{\chi_{\mathrm{pcf}}}
\newcommand{\och}{\chi_{\mathrm{odd}}}
\begin{document}
\date{\today}
\title{The proper conflict-free $k$-coloring problem and the odd $k$-coloring problem are NP-complete on bipartite graphs}
\author[2,1]{Jungho Ahn\thanks{Supported by the Institute for Basic Science (IBS-R029-C1).}} 
\author[2,3]{Seonghyuk Im\thanks{Supported by the Institute for Basic Science (IBS-R029-C4) and the POSCO Science Fellowship of POSCO TJ Park Foundation.}}
\author[$\ast$1,2]{Sang-il Oum}
\affil[1]{Discrete Mathematics Group, Institute for Basic Science (IBS), Daejeon, South~Korea}
\affil[2]{Department of Mathematical Sciences, KAIST, Daejeon,~South~Korea}
\affil[3]{Extremal Combinatorics and Probability Group, Institute~for~Basic~Science~(IBS), Daejeon, South~Korea}
\affil[ ]{\small \textit{Email addresses:} \texttt{junghoahn@kaist.ac.kr}, \texttt{seonghyuk@kaist.ac.kr}, \texttt{sangil@ibs.re.kr}}

\maketitle

\begin{abstract}
	A proper coloring of a graph is \emph{proper conflict-free} if every non-isolated vertex $v$ has a neighbor whose color is unique in the neighborhood of~$v$.
	A proper coloring of a graph is \emph{odd} if for every non-isolated vertex~$v$, there is a color appearing an odd number of times in the neighborhood of $v$.
	For an integer $k$, the \textsc{PCF $k$-Coloring} problem asks whether an input graph admits a proper conflict-free $k$-coloring and the \textsc{Odd $k$-Coloring} asks whether an input graph admits an odd $k$-coloring.
	We show that for every integer $k\geq3$, both problems are NP-complete, even if the input graph is bipartite.
	Furthermore, we show that the \textsc{PCF $4$-Coloring} problem is NP-complete when the input graph is planar.
\end{abstract}

\section{Introduction}\label{sec:intro}

Fabrici, Lu\v{z}ar, Rindo\v{s}ov\'a, and Sot\'{a}k~\cite{Fabrici2022} introduced the proper conflict-free coloring of graphs.
For a positive integer $k$, a \emph{proper $k$-coloring} of a graph is a function that maps each vertex to one of the $k$ colors such that adjacent vertices receive distinct colors.
A \emph{proper conflict-free $k$-coloring} of a graph is a proper $k$-coloring such that every non-isolated vertex~$v$ has a neighbor whose color is unique in the neighborhood of $v$.
We abbreviate it as a \emph{PCF $k$-coloring}.
The \emph{proper conflict-free chromatic number} or \emph{PCF chromatic number} of~$G$, denoted by $\pcf(G)$, is the smallest integer~$k$ such that $G$ admits a PCF $k$-coloring.

Petru\v{s}evski and \v{S}krekovski \cite{Petrusevski2021} introduced the odd coloring of graphs. 
An \emph{odd $k$-coloring} of a graph is a proper $k$-coloring such that for every non-isolated vertex $v$ of~$G$, there is a color appearing an odd number of times in the neighborhood of $v$.
The \emph{odd chromatic number} of~$G$, denoted by $\och(G)$, is the smallest integer~$k$ such that $G$ admits an odd $k$-coloring.

By definition, every proper conflict-free coloring is an odd coloring, so $\och(G)\leq\pcf(G)$.
The gap between the chromatic number and the odd chromatic number of a graph could be arbitrary large.
For instance, let $G$ be the graph obtained from the complete graph on $n$ vertices by replacing each edge with a length-$2$ path.
Since $G$ is bipartite, it has the chromatic number $2$, but it is known that $\och(G)=\pcf(G)=n$~\cite{Caro2022}.
Both colorings have been investigated actively~\cite{Caro2022a,Caro2022,Cho2022a,Cho2022,Cranston2022,Fabrici2022,Hickingbotham2022,Liu2022,Petrusevski2021,Petrusevski2021a,Qi2022}.

For an integer $k$, the \textsc{PCF $k$-Coloring} problem or the \textsc{Odd $k$-Coloring} problem asks whether an input graph admits a PCF $k$-coloring or an odd $k$-coloring, respectively.
It is readily seen that for a graph $G$,
\begin{itemize}
	\item	$\pcf(G)\le 2$ if and only if $G$ has the maximum degree at most $1$, and 
	\item	$\och(G)\le 2$ if and only if $G$ is bipartite and $\deg_G(v)$ is odd or zero for every vertex $v$ of~$G$.
\end{itemize}
Thus, for $k\leq2$, both the \textsc{PCF $k$-Coloring} problem and the \textsc{Odd $k$-Coloring} problem are polynomial-time solvable.

Caro, Petru\v{s}evski, and \v{S}krekovski~\cite{Caro2022a,Caro2022} showed that the following reductions lead to the NP-hardness of both problems.

\begin{LEM}[Caro, Petru\v{s}evski, and \v{S}krekovski~\cite{Caro2022}]\label{lem:red1}
	For a graph $G$, let~$H$ be the graph obtained from $G$ by adding a pendant vertex to every vertex of~$G$.
	Then $\chi(G)\le \pcf(H)\le \chi(G)+1$.
\end{LEM}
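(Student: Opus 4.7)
The plan is to prove the two inequalities separately, both by very direct arguments. For the lower bound $\chi(G)\le\pcf(H)$, I would observe that $G$ is an induced subgraph of $H$ (namely $H$ minus the pendants), so any PCF $k$-coloring of $H$, being in particular a proper $k$-coloring, restricts to a proper $k$-coloring of $G$. This immediately gives $\chi(G)\le\pcf(H)$, without using the conflict-free condition at all.

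For the upper bound $\pcf(H)\le\chi(G)+1$, I would exhibit an explicit PCF $(\chi(G)+1)$-coloring. Start with any proper coloring $c$ of $G$ using colors $\{1,\dots,\chi(G)\}$, and extend it to $H$ by assigning the fresh color $\chi(G)+1$ to every pendant. Properness is immediate: each pendant's unique neighbor lies in $V(G)$ and therefore receives a color in $\{1,\dots,\chi(G)\}$, while edges inside $G$ are properly colored by $c$.

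It remains to verify the conflict-free condition at every non-isolated vertex of~$H$. A pendant has exactly one neighbor, so that neighbor's color is trivially unique in its neighborhood. A vertex $v\in V(G)$ is non-isolated in~$H$ (since its pendant is attached to it), and its neighborhood in $H$ consists of its $G$-neighbors, each colored in $\{1,\dots,\chi(G)\}$, together with its own pendant colored $\chi(G)+1$; thus the color $\chi(G)+1$ appears exactly once in $N_H(v)$, as required. There is no genuine obstacle in this argument; the only real content is the observation that using a single \emph{signalling} color on all pendants simultaneously discharges the conflict-free requirement at every original vertex of $G$.
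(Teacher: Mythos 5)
Your proof is correct. The paper states this lemma without proof, attributing it to Caro, Petru\v{s}evski, and \v{S}krekovski, and your argument---restriction to the induced subgraph $G$ for the lower bound, and a single fresh ``signalling'' color on all pendants for the upper bound---is the standard one and checks out in all cases, including pendants and isolated vertices of $G$.
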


\begin{LEM}[Caro, Petru\v{s}evski, and \v{S}krekovski~\cite{Caro2022}]\label{lem:red2}
	For a graph $G$, let~$H$ be the graph obtained from $G$ by adding one vertex adjacent to all other vertices.
	Then $\chi(G)+1\le \pcf(H)\le \chi(G)+2$.
\end{LEM}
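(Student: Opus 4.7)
The plan is to prove the two inequalities essentially independently. For the lower bound, I would observe that any PCF coloring of $H$ is in particular a proper coloring of $H$, so $\pcf(H)\ge \chi(H)$. Since the newly added apex vertex $v$ is adjacent to every vertex of $G$, its color must differ from every color appearing on $V(G)$, and consequently $\chi(H)=\chi(G)+1$. This yields $\pcf(H)\ge\chi(G)+1$.

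For the upper bound, I would construct an explicit PCF $(\chi(G)+2)$-coloring of $H$. Set $k=\chi(G)$ and fix a proper $k$-coloring $c$ of $G$ using colors $\{1,\dots,k\}$. Pick an arbitrary vertex $u\in V(G)$ (the case $V(G)=\emptyset$ is trivial), and define $c'$ on $V(H)$ by $c'(v)=k+1$, $c'(u)=k+2$, and $c'(w)=c(w)$ for every other vertex $w$. Since $k+1$ and $k+2$ are colors not used elsewhere, $c'$ is still a proper coloring of $H$.

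It then suffices to verify the conflict-free condition at every non-isolated vertex. For any $w\in V(G)$, the apex $v$ lies in $N_H(w)$ and is the only vertex of $H$ with color $k+1$, so $v$ is a uniquely colored neighbor of $w$. For the apex $v$ itself, $N_H(v)=V(G)$; the vertex $u$ is the only vertex in $V(G)$ with color $k+2$, so $u$ is a uniquely colored neighbor of~$v$. Hence $c'$ is indeed a PCF $(k+2)$-coloring of $H$, giving $\pcf(H)\le \chi(G)+2$.

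There is no substantial obstacle in this argument; the scheme ``give $v$ a fresh color that automatically witnesses every other neighborhood, and sacrifice one vertex of $G$ to a second fresh color to witness $N_H(v)$'' essentially writes itself. The only point worth pausing on is why $\chi(G)+1$ colors may fail to suffice: if we used only a single fresh color at $v$ and kept the original coloring on $G$, the neighborhood $N_H(v)=V(G)$ could have every color class of size at least two, so $v$ would have no uniquely colored neighbor. The extra color $k+2$ is introduced precisely to repair this deficiency at $v$.
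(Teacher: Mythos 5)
Your proof is correct. The paper does not actually prove Lemma~\ref{lem:red2}; it is quoted from Caro, Petru\v{s}evski, and \v{S}krekovski without proof, so there is nothing to compare against, but your argument is the natural one and checks out: the lower bound via $\pcf(H)\ge\chi(H)=\chi(G)+1$, and the upper bound via giving the apex a fresh color (which serves as the unique witness for every vertex of $G$) and recoloring one vertex $u$ of $G$ with a second fresh color (which serves as the unique witness for the apex). The degenerate cases ($V(G)=\emptyset$ or $G$ edgeless) are handled correctly since the PCF condition is only imposed at non-isolated vertices.
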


\begin{LEM}[Caro, Petru\v{s}evski, and \v{S}krekovski~\cite{Caro2022a}]\label{lem:redodd}
	For a graph $G$, let~$H$ be the graph obtained from $G$ by adding a pendant vertex to every vertex of~$G$ having even degree.
	Then $\chi(G)=\och(H)$.
\end{LEM}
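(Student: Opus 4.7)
The plan is to prove the two inequalities $\chi(G)\le\och(H)$ and $\och(H)\le\chi(G)$ separately, since each direction has a distinct and short argument.

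For $\chi(G)\le\och(H)$, I would simply note that $G$ is an induced subgraph of $H$ (the construction only adds new pendant vertices) and that every odd coloring is in particular a proper coloring. Hence the restriction of any odd $k$-coloring of $H$ to $V(G)$ is a proper $k$-coloring of~$G$, yielding $\chi(G)\le\och(H)$.

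For $\och(H)\le\chi(G)$, I would start with any proper $\chi(G)$-coloring $c$ of~$G$ and extend it to~$H$ by assigning, to each pendant vertex $u$ attached to some $v\in V(G)$, an arbitrary color in $\{1,\dots,\chi(G)\}\setminus\{c(v)\}$ (the edgeless case $\chi(G)\le 1$ is degenerate and I would handle it in a single line). The resulting coloring is clearly proper. To check the odd-color condition, I would case on the type of vertex: a pendant $u$ has exactly one neighbor $v$, so $c(v)$ appears once—an odd number of times—in $N_H(u)$. For a vertex $v\in V(G)$, the crucial parity observation is that $\deg_H(v)$ is always odd, because either $\deg_G(v)$ was already odd (in which case no pendant is added) or $\deg_G(v)$ was even and exactly one pendant is added. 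Since the color multiplicities in $N_H(v)$ sum to the odd number $\deg_H(v)$, at least one of these multiplicities must itself be odd, which is exactly what the odd condition requires.

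Combining the two inequalities yields $\chi(G)=\och(H)$. I do not expect any real obstacle: the entire argument hinges on the single parity remark that the added pendants convert every even-degree vertex into an odd-degree vertex, which then makes the odd-color condition automatic via a counting-mod-$2$ argument. The only point requiring a sentence of care is ensuring that a color distinct from $c(v)$ is available for each pendant, which is immediate once $\chi(G)\ge 2$.
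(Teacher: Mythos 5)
The paper does not actually prove this lemma; it is quoted from Caro, Petru\v{s}evski, and \v{S}krekovski \cite{Caro2022a} without proof, so there is no in-paper argument to compare against. Your argument is the natural one and is essentially correct: the restriction direction is immediate, and for the extension direction the key point is exactly the parity observation you identify, namely that every vertex of $V(G)$ has odd degree in $H$, so the multiplicities of the colors in its neighborhood sum to an odd number and hence cannot all be even; the pendant vertices are handled by their unique neighbor. One caveat: the ``degenerate'' case you propose to dispatch in a single line cannot in fact be handled, because the statement is false there. If $G$ is a nonempty edgeless graph (e.g.\ $G=K_1$), then $\chi(G)=1$, every vertex has even degree $0$ and receives a pendant, so $H$ is a disjoint union of edges and $\och(H)=2>\chi(G)$; more generally your extension step needs a color different from $c(v)$ to exist, which forces $\chi(G)\ge 2$. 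So the lemma should be read with the implicit assumption that $G$ has an edge (harmless for the NP-hardness application, where $\chi(G)\ge 3$), and your write-up should say this explicitly rather than claim the case is routine.
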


Here is a variation of Lemma~\ref{lem:red2}.
We omit its easy proof.

\begin{LEM}
	For a graph $G$,
	let $H$ be the graph obtained from $G$ by adding two new adjacent vertices and making them adjacent to all other vertices. Then $\chi(G)+2= \pcf(H)$.
\end{LEM}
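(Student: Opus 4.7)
The plan is to prove the two inequalities $\pcf(H)\le \chi(G)+2$ and $\pcf(H)\ge \chi(G)+2$ separately; both are short, and the statement is essentially a tightening of Lemma~\ref{lem:red2} obtained by replacing the single added universal vertex with two mutually adjacent ones.

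For the upper bound, I would begin with a proper $\chi(G)$-coloring of $G$ using colors $\{1,\ldots,\chi(G)\}$ and extend it by assigning the two new vertices $u,v$ the fresh colors $\chi(G)+1$ and $\chi(G)+2$. This is plainly a proper $(\chi(G)+2)$-coloring of $H$. To verify the PCF property, note that the two fresh colors are each used exactly once in $H$, so for every vertex $w\in V(G)$ each of $u$ and $v$ is a neighbor of $w$ whose color is unique in the neighborhood of $w$; and $u$ (resp.\ $v$) has $v$ (resp.\ $u$) as a neighbor carrying a globally unique color.

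For the lower bound, I would argue without using the PCF property at all: in any proper $k$-coloring $c$ of $H$, the colors $c(u)$ and $c(v)$ must be distinct (since $uv$ is an edge) and must not appear anywhere on $V(G)$ (since $u$ and $v$ are adjacent to every vertex of $G$). Therefore the restriction of $c$ to $V(G)$ is a proper coloring of $G$ using at most $k-2$ of the $k$ colors of $c$, which yields $k\ge \chi(G)+2$.

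The anticipated main obstacle is essentially nonexistent, which is presumably why the paper omits the proof. The single substantive observation is that adding a second adjacent universal vertex forces both extra colors to disappear from $V(G)$ in any proper coloring, so the lower bound from Lemma~\ref{lem:red2} improves by one and meets the trivial upper bound.
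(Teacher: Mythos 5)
Your proof is correct, and since the paper explicitly omits the proof of this lemma as ``easy,'' your argument is surely the intended one: the upper bound by giving the two new universal vertices two fresh colors (each globally unique, hence serving as the uniquely colored neighbor for every vertex), and the lower bound by observing that in any proper coloring the two adjacent universal vertices use two colors that cannot appear on $V(G)$, so the restriction to $G$ is a proper coloring with at most $k-2$ colors. Both directions check out, including the degenerate cases.
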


For $k\geq3$, above lemmas can be combined with the NP-hardness of deciding $\chi(G)\le k$, or the NP-hardness of deciding whether $\chi(G)\leq k$ or $\chi(G)\geq k+2$ shown by Khanna, Linial, and Safra~\cite{Khanna2000}.
That is how Caro, Petru\v{s}evski, and \v{S}krekovski~\cite{Caro2022a,Caro2022} showed that 
it is NP-complete to decide whether \[ \pcf(G)\le k \text{ for } k\ge 4, \] and it is NP-complete to decide whether \[ \och(G)\le k \text{ for } k\ge 3.\] 
As these reductions require $\chi(G)\ge 3$ to be NP-hard, 
we cannot use the above reductions to prove that our problems are NP-complete on bipartite graphs.

A graph is \emph{subcubic} if every vertex has degree at most $3$.
We found a reference implying that the \textsc{PCF $3$-Coloring} problem is NP-complete on subcubic bipartite planar graphs.
In 2009, Li, Yao, Zhou, and Broersma~\cite{LYZB2009} showed that it is NP-complete to decide whether a $2$-connected subcubic bipartite planar graph admits a proper $3$-coloring such that for every vertex~$v$ of $G$, its neighbors have at least two colors.
For a subcubic graph, such a coloring is precisely the PCF $3$-coloring, and therefore it implies that it is NP-complete to decide whether $\pcf(G)\le3$ on subcubic bipartite planar graphs.
In addition, from their reduction, it can be easily seen that it is NP-complete to decide whether $\och(G)\leq3$ on subcubic bipartite planar graphs.

We prove that it is NP-complete to decide whether $\pcf(G)\le k$ for $k\ge 3$, even if $G$ is bipartite.

\begin{THMbipartite}
	For every integer $k\geq3$, it is NP-complete to decide whether a graph $G$ admits 
	a PCF $k$-coloring, even if $G$ is bipartite. 
\end{THMbipartite}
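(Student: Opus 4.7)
My plan is to prove NP-completeness via polynomial-time reduction from a known NP-hard problem; membership in NP is immediate since a candidate coloring can be verified in polynomial time. The case $k=3$ is already handled by the work of Li, Yao, Zhou, and Broersma~\cite{LYZB2009} cited in the introduction, so I focus on $k\ge 4$ and reduce from the \textsc{Proper $k$-Coloring} problem on general graphs, which is classically NP-hard for $k\ge 3$.

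Given a graph $G$, I construct a bipartite graph $H$ by (i) subdividing each edge $uv\in E(G)$ once with a new vertex $w_{uv}$, and (ii) attaching a pendant $p_v$ to every $v\in V(G)$; the bipartition of $H$ has $V(G)$ on one side and $\{w_{uv}:uv\in E(G)\}\cup\{p_v:v\in V(G)\}$ on the other. For the backward direction, suppose $c$ is a PCF $k$-coloring of $H$. Each $w_{uv}$ has degree $2$ with neighbors $\{u,v\}$, and the PCF condition at $w_{uv}$ demands that some neighbor carry a unique color in $N_H(w_{uv})$, which by properness forces $c(u)\neq c(v)$; hence the restriction $c|_{V(G)}$ is a proper $k$-coloring of $G$. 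For the forward direction, I extend a given proper $k$-coloring $c$ of $G$ by setting $c(w_{uv})$ to be the smallest color in $\{1,\ldots,k\}\setminus\{c(u),c(v)\}$. A short case analysis on whether $c(v)\in\{1,2\}$ or $c(v)\ge 3$ shows that the set $\{c(w_{uv}):u\in N_G(v)\}$ consists of at most two distinct colors, so since $k\ge 4$ there is a color $\alpha\in\{1,\ldots,k\}\setminus\{c(v)\}$ avoided by all $c(w_{uv})$; setting $c(p_v)=\alpha$ makes $p_v$ the required unique-color witness at $v$, while the PCF condition at each $w_{uv}$ is automatic from $c(u)\ne c(v)$ and the PCF condition at each $p_v$ is trivial.

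The main obstacle is establishing the ``at most two colors'' claim on each subdivided neighborhood; this is the crux of the reduction and is precisely what restricts the construction to $k\ge 4$. The analogous construction breaks for $k=3$, because when $v$ has neighbors in $G$ using each of the two non-$c(v)$ colors, the pendant $p_v$ cannot be assigned any color both different from $c(v)$ and distinct from the colors used on adjacent subdivision vertices; this is why the $k=3$ case is invoked separately from \cite{LYZB2009} rather than being covered by this reduction.
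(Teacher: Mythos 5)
Your reduction is correct, and for $k\ge 4$ it takes a genuinely different route from the paper. The paper splits the argument into three pieces: the cited result of Li, Yao, Zhou, and Broersma for $k=3$ (same as yours); a greedy-extension lemma showing $\chi(G)\le\pcf(\mathrm{sub}_1(G))\le\max(\chi(G),5)$, which only settles $k\ge 5$ because extending a partial coloring to a degree-$2$ subdivision vertex may be blocked by up to four colors; and a separate, rather elaborate gadget (a $1$-subdivided $K_4$-based graph $G_{n,m}$ glued to both sides of the bipartition) to lift the $k=3$ hardness to $k=4$. You instead handle all $k\ge 4$ uniformly by reducing from ordinary $k$-colorability: the extra pendant $p_v$ gives each original vertex a dedicated unique-color witness, and the ``smallest available color'' rule on the subdivision vertices is exactly what guarantees that the subdivision neighbors of $v$ occupy only two colors (namely $\{2,3\}$, $\{1,3\}$, or $\{1,2\}$ according as $c(v)=1$, $c(v)=2$, or $c(v)\ge 3$), leaving a free color for $p_v$ once $k\ge 4$. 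I verified this case analysis and the rest of the argument: the graph $H$ is bipartite, properness and the conflict-free condition hold at every vertex of $H$, and the backward direction is immediate from the degree-$2$ subdivision vertices. Your approach buys a shorter and more unified proof that avoids the paper's $k=4$ gadget entirely; what it does not give (and the paper's gadget also does not aim at) is any additional structural information such as planarity. One small wording slip: in the backward direction it is the conflict-free condition at $w_{uv}$ (no color of $\{c(u),c(v)\}$ is unique if $c(u)=c(v)$), not properness, that forces $c(u)\ne c(v)$; the conclusion you draw is nevertheless the right one.
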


We also prove that it is NP-complete to decide whether $\och(G)\le k$ for $k\ge 3$, even if $G$ is bipartite.

\begin{THModdbipartite}
	For every integer $k\geq3$, it is NP-complete to decide whether a graph $G$ admits an odd $k$-coloring, even if $G$ is bipartite.
\end{THModdbipartite}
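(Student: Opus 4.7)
The plan is to split into the cases $k=3$ and $k\ge 4$, mirroring the structure one would naturally use for the PCF theorem. Membership in NP is routine: an odd $k$-coloring can be verified in polynomial time by checking properness and, for each non-isolated vertex, the parities of the color multiplicities in its neighborhood.

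For $k=3$ I would simply invoke the observation already flagged in the introduction: the Li--Yao--Zhou--Broersma reduction outputs subcubic bipartite planar graphs, and on those graphs one checks directly that a proper $3$-coloring in which every neighborhood sees at least two colors coincides with an odd $3$-coloring, by a short case analysis on degree-$\leq 3$ neighborhoods and the three-color palette. This transfers NP-completeness to Odd $3$-Coloring on bipartite graphs with essentially no additional work.

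For $k\ge 4$ the plan is to reduce from the bipartite PCF $k$-Coloring problem, which is NP-complete by Theorem~\ref{thm:mainbipartite}. Given a bipartite instance $G$ for PCF $k$-Coloring, I would construct a bipartite graph $H\supseteq G$ by attaching to each vertex $v\in V(G)$ a small bipartite \emph{parity gadget} $F_v$ with the following property: in any odd $k$-coloring of $H$, the color that appears an odd number of times in $N_H(v)$ must appear exactly once and lie in $N_G(v)$. Such a gadget locally upgrades the weaker ``odd'' condition at $v$ into the stronger ``PCF'' condition at $v$ relative to $G$, so that $\och(H)\le k$ becomes equivalent to $\pcf(G)\le k$. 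The gadgets reuse the original palette of $k$ colors and are glued along the bipartition of $G$, so $H$ stays bipartite.

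The hard part is engineering $F_v$ so that an odd $k$-coloring of $H$ cannot satisfy the odd condition at $v$ by producing a high odd multiplicity among the neighbors, such as three neighbors sharing one color. To rule this out, the gadget has to fix the parities of all but one of the color classes meeting $N_H(v)$ to be even, and it has to do so while staying bipartite, of size polynomial in $k$, and compatible with the given bipartition of $G$. Designing such gadgets uniformly in $k$ is the delicate technical step; once they are in place, Theorems~\ref{thm:mainbipartite} and \ref{thm:mainoddbipartite} become two variants of essentially the same construction.
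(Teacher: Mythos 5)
There are two genuine gaps. First, your $k=3$ step rests on the claim that, on subcubic graphs, a proper $3$-coloring in which every non-isolated vertex sees at least two colors in its neighborhood \emph{coincides} with an odd $3$-coloring. This is false in one direction: if a degree-$3$ vertex has all three neighbors of the same color, that color appears three times, which is odd, so the odd condition holds while the ``at least two colors'' condition fails. The equivalence does hold for PCF $3$-colorings on subcubic graphs, but for the odd version one must go back into the Li--Yao--Zhou--Broersma reduction and check that this degenerate configuration cannot arise (or does not break the reduction) in the specific graphs it produces; the paper is careful to phrase it as ``from their reduction, it can be easily seen,'' not as a blanket coincidence. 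This gap is patchable but real.

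Second, and more seriously, for $k\geq4$ your entire argument hinges on a per-vertex bipartite ``parity gadget'' $F_v$ that forces, in any odd $k$-coloring of $H$, the odd-multiplicity color in $N_H(v)$ to appear \emph{exactly once} and to lie in $N_G(v)$. You never construct it, and you yourself flag its design as the delicate step --- so the proposal is a plan, not a proof. Worse, it is doubtful such a gadget exists: a fixed gadget attached at $v$ contributes a fixed multiset of additional neighbors, but to convert ``some color has odd multiplicity in $N_H(v)$'' into ``some color has multiplicity exactly one in $N_G(v)$'' it would have to adaptively cancel the parities of whatever colors happen to occur among $N_G(v)$, which it cannot see. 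The paper avoids this entirely by not reducing from PCF at all: for $k\geq5$ it reduces from ordinary $k$-colorability via the $1$-subdivision (which is automatically bipartite), using the fact that every neighbor of an original vertex is a degree-$2$ subdivision vertex whose two neighbors must receive distinct colors (Lemma~\ref{lem:degree2}), and proving $\chi(G)\le\och(\mathrm{sub}_1(G))\le\max(\chi(G),5)$; for $k=4$ it reduces from the bipartite $k=3$ case via a single global gadget (Proposition~\ref{lem:bipartite4}) whose degree-$2$ vertices force all attached vertices to share one color, effectively deleting a color from $G$. You would need either to supply a working gadget or to switch to a subdivision-style reduction from chromatic number.
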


In addition, we show that it is NP-complete to decide whether $\pcf(G)\le4$, even if $G$ is planar. 

\begin{THMplanar}
	It is NP-complete to decide whether a graph $G$ admits a PCF $4$-coloring, even if $G$ is planar.
\end{THMplanar}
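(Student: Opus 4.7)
The plan is to give a polynomial-time reduction from planar $3$-colorability, which is NP-complete by Garey, Johnson, and Stockmeyer. Membership in NP is immediate, since given a candidate $4$-coloring one can verify in polynomial time that it is proper and that every non-isolated vertex has a neighbor whose color appears uniquely in its neighborhood.

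For the hardness direction, given a planar graph $G$, I would construct in polynomial time a planar graph $H$ such that $\chi(G)\le 3$ if and only if $\pcf(H)\le 4$. The construction attaches a small planar \emph{forcing gadget} at each vertex $v$ of $G$ (and likely subdivides some edges of $G$) so that, in any PCF $4$-coloring of $H$, the color of $v$ is constrained to lie in a prescribed $3$-element subset of the four colors. Under this constraint, the restriction of a PCF $4$-coloring of $H$ to $V(G)$ is a proper $3$-coloring of $G$; conversely, any proper $3$-coloring of $G$ should extend to a PCF $4$-coloring of $H$ by coloring each gadget so that the reserved fourth color supplies the required unique color in every local neighborhood.

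The main obstacle will be designing the gadget and verifying its local properties. Planarity is the binding restriction here: unlike in the bipartite (non-planar) case treated in the preceding theorem, one cannot freely add high-degree vertices that force specific color patterns without introducing crossings, so the gadget has to be built from short cycles, pendants, and subdivisions arranged around $v$. It must be engineered so that the PCF condition at its interior vertices is satisfiable only when $v$ avoids one designated color, yet so that it remains PCF-colorable regardless of which of the three allowed colors appear among the $G$-neighbors of $v$. A natural first attempt is to hang on $v$ a block consisting of a short even cycle together with pendants, chosen so that at least one interior vertex fails the "unique color'' requirement whenever $v$ takes the forbidden color. If no single vertex gadget can simultaneously enforce the forcing and remain locally satisfiable, I would introduce companion \emph{edge gadgets} along subdivided edges of $G$, which have more room to absorb constraints and to mediate the fourth color between endpoints.

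Once a suitable gadget is in hand, correctness reduces to two routine directions: a forward construction explicitly extending a proper $3$-coloring of $G$ to a PCF $4$-coloring of $H$ by coloring each gadget according to a fixed scheme, and a backward argument showing that any PCF $4$-coloring of $H$, restricted to $V(G)$, uses only three colors and hence is a proper $3$-coloring of $G$. Finally I would check that $|V(H)|$ and $|E(H)|$ are polynomial in $|V(G)|$ and that a planar embedding of $H$ can be obtained from a planar embedding of $G$ by inserting each gadget inside a face incident with $v$, so that the whole reduction runs in polynomial time.
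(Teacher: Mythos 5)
Your proposal correctly identifies the overall shape of the argument (NP membership is trivial; reduce from a known NP-complete planar problem by attaching planar gadgets), but it is a plan rather than a proof: the forcing gadget, which is the entire technical content of the theorem, is never constructed, and you explicitly defer its design ("the main obstacle will be designing the gadget"). Beyond that, there is a conceptual gap in the backward direction even at the plan level. Any gadget built from cycles, pendants, and subdivisions is symmetric under permutations of the four colors, so it cannot force the color of $v$ into a \emph{prescribed} $3$-element subset; at best it can force $v$'s neighborhood to "use up" some color, and which color is excluded may vary from vertex to vertex. A proper $4$-coloring of $G$ in which each vertex merely avoids \emph{some} color is not a proper $3$-coloring, so you still need a global consistency argument showing that the excluded color is the same at every vertex. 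This is precisely the crux of the paper's proof: it attaches a "tent" inside every face $f$ of a $2$-connected plane graph, shows that all tent vertices adjacent to $V(G)$ within $f$ receive a single color $\xi(f)$, and then proves by a propagation argument (if two faces at a vertex $v$ had different $\xi$-values, two neighbors of $v$ on $G$-edges would be forced to the same color, killing the unique-color condition at $v$) that $\xi$ is constant, so all of $V(G)$ avoids one common color.

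A second difference worth noting: the paper does not reduce from planar $3$-colorability but from the PCF $3$-coloring problem on $2$-connected planar graphs, which is NP-complete by Li, Yao, Zhou, and Broersma. This choice simplifies the forward direction, since a PCF $3$-coloring of $G$ already provides each original vertex with a uniquely colored neighbor inside $G$, and the tent only has to avoid destroying that uniqueness (all tent neighbors of $v$ get the same fourth color). If you insist on reducing from plain $3$-colorability, your gadget must additionally manufacture the uniquely colored neighbor for every vertex of $G$, which adds constraints you have not addressed. As written, the proposal cannot be completed without supplying both the gadget and the global color-consistency argument.
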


Fabrici, Lu\v{z}ar, Rindo\v{s}ov\'a, and Sot\'{a}k \cite{Fabrici2022} presented a planar graph having the PCF chromatic number $6$ and showed that every planar graph admits a PCF $8$-coloring.
Thus, it remains as an open problem to determine the computational complexity of deciding whether $\pcf(G)\le k$ on planar graphs~$G$ when $k\in \{5,6,7\}$. 

We organize this paper as follows.
In Section~\ref{sec:prelim}, we introduce some terminologies in graph theory.
In Section~\ref{sec:bip}, we prove Theorems~\ref{thm:mainbipartite} and~\ref{thm:mainoddbipartite}.
In Section~\ref{sec:planar}, we prove Theorem~\ref{thm:mainplanar}.
In Section~\ref{sec:end}, we conclude the paper with some open problems.

\section{Preliminaries}\label{sec:prelim}

In this paper, all graphs are simple and finite.
For a positive integer $k$, let~$K_k$ be the complete graph on $k$ vertices and $[k]:=\{1,2,\ldots,k\}$.
A \emph{pendant vertex} is a vertex of degree $1$.
A \emph{cligue} in a graph $G$ is a set of pairwise adjacent vertices of $G$.
The \emph{$k$-subdivision} of $G$, denoted by $\mathrm{sub}_k(G)$, is the graph obtained from~$G$ by replacing each edge with a path of length $k+1$.
A graph $G$ is \emph{$k$-connected} if it has more than $k$ vertices and $G\setminus X$ is connected for every set $X\subseteq V(G)$ of size less than $k$.

A \emph{plane} graph is a planar graph embedded into $\mathbb{R}^2$ without crossings of edges.
It is well known that in every $2$-connected plane graph, each face is bounded by a cycle; see \cite[Proposition~4.2.6]{Diestel}.

We will use the following simple lemmas.

\begin{LEM}\label{lem:degree2}
	In any odd $k$-coloring of a graph,
	the neighbors of a degree-$2$ vertex have distinct colors.
\end{LEM}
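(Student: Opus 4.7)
The plan is to argue by contradiction and inspect the neighborhood of a degree-$2$ vertex directly. Let $v$ be a degree-$2$ vertex with neighbors $u_1$ and $u_2$, and suppose we have an odd $k$-coloring $c$. Since $v$ is non-isolated, the definition of an odd coloring demands some color that appears an odd number of times in $N(v) = \{u_1, u_2\}$.

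The key observation is simply that $|N(v)| = 2$, so the multiset of colors on $N(v)$ has exactly two elements. If $c(u_1) = c(u_2)$, then the one color present appears exactly twice in $N(v)$, which is even, and no other color appears at all; hence no color has odd multiplicity in $N(v)$, contradicting the odd coloring hypothesis. Therefore $c(u_1) \neq c(u_2)$, as claimed.

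I do not anticipate any real obstacle: the statement is essentially immediate from unpacking the definition of an odd coloring on a vertex with only two neighbors, and the only parities available in a two-element multiset force the two neighbors to be colored differently. The proof is a one-line case check and needs no appeal to the earlier reduction lemmas.
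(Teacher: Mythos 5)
Your proof is correct and matches the paper's approach; the paper simply declares the statement trivial from the definition of an odd $k$-coloring, and your argument is exactly the one-line parity check that unpacking that definition yields. Nothing is missing.
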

\begin{proof}
	It is trivial from the definition of an odd $k$-coloring.
\end{proof}

\begin{LEM}\label{lem:subdivision_lowerbound}
	$\chi(G)\le \och(\mathrm{sub}_1(G))\le \pcf(\mathrm{sub}_1(G))$ for every graph $G$.
\end{LEM}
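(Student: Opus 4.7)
The plan is to prove the two inequalities separately. The right inequality $\och(\mathrm{sub}_1(G))\le\pcf(\mathrm{sub}_1(G))$ is immediate from the general fact, noted in the introduction, that every proper conflict-free coloring is an odd coloring, since a color appearing exactly once in a neighborhood certainly appears an odd number of times.

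For the left inequality $\chi(G)\le\och(\mathrm{sub}_1(G))$, I would take an arbitrary odd $k$-coloring $c$ of $H:=\mathrm{sub}_1(G)$ and show that the restriction $c|_{V(G)}$ is a proper $k$-coloring of $G$. Let $uv\in E(G)$. By the construction of the $1$-subdivision, there is a new vertex $w_{uv}\in V(H)\setminus V(G)$ whose only neighbors in $H$ are $u$ and $v$, so $w_{uv}$ has degree exactly $2$ in $H$. Applying Lemma~\ref{lem:degree2} to $w_{uv}$ in the odd $k$-coloring $c$ yields $c(u)\ne c(v)$. Since this holds for every edge of $G$, the restriction $c|_{V(G)}$ is a proper $k$-coloring of $G$, hence $\chi(G)\le k=\och(\mathrm{sub}_1(G))$.

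There is essentially no obstacle here: the whole argument is a direct application of Lemma~\ref{lem:degree2} to the subdivision vertices, whose defining property is that they have degree $2$ and their two neighbors are exactly the endpoints of the corresponding edge of $G$. The only trivial thing to mention, if desired, is that the case where $G$ has no edges is handled for free since then $\chi(G)\le 1$ while $\och(\mathrm{sub}_1(G))\ge 1$ whenever $V(G)\ne\emptyset$.
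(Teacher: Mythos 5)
Your proof is correct and follows exactly the paper's approach: the paper's one-line proof also applies Lemma~\ref{lem:degree2} to the degree-$2$ subdivision vertices to conclude that any odd $k$-coloring of $\mathrm{sub}_1(G)$ restricts to a proper $k$-coloring of $G$, with the right-hand inequality being the general fact that every PCF coloring is an odd coloring. You have simply spelled out the details that the paper leaves implicit.
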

\begin{proof}
	By Lemma~\ref{lem:degree2}, every odd $k$-coloring of $\mathrm{sub}_1(G)$ induces a proper $k$-coloring of $G$.
\end{proof}

\section{NP-completeness on bipartite graphs}\label{sec:bip}

We now restate and show Theorems~\ref{thm:mainbipartite} and ~\ref{thm:mainoddbipartite}.

\begin{THM}\label{thm:mainbipartite}
	For every integer $k\geq3$, it is NP-complete to decide whether a graph $G$ admits 
	a PCF $k$-coloring, even if $G$ is bipartite. 
\end{THM}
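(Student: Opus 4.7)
I would prove the theorem by reduction from the classical $k$-coloring problem on general graphs, which is NP-hard for every $k\ge 3$; membership in NP is routine. Given an instance graph $G$, I construct a bipartite graph $H$ in two steps. First, I replace $G$ by $\mathrm{sub}_1(G)$, which is bipartite with parts $V(G)$ and $\{w_e : e\in E(G)\}$. Second, at each $v\in V(G)$ I attach a small bipartite \emph{PCF-forcing gadget} $F_v$; the role of $F_v$ is to provide $v$ with one or more new neighbors whose colors can be arranged to certify the PCF condition at $v$, thereby decoupling PCF on the original vertices from the coloring constraints imposed by the subdivision structure.

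The backward direction is easy. Given any PCF $k$-coloring $c$ of $H$, every subdivision vertex $w_{uv}$ has degree $2$, and a PCF coloring is in particular an odd coloring, so by Lemma~\ref{lem:degree2} the two neighbors $u$ and $v$ of $w_{uv}$ receive distinct colors. Hence the restriction of $c$ to $V(G)$ is a proper $k$-coloring of $G$ and $\chi(G)\le k$. For the forward direction, starting from a proper $k$-coloring $c$ of $G$, I color each $v\in V(G)$ by $c(v)$, each subdivision vertex $w_{uv}$ by some color in $\{1,\dots,k\}\setminus\{c(u),c(v)\}$ (nonempty since $k\ge 3$), and each gadget $F_v$ so that every non-isolated vertex of $F_v$ satisfies PCF in $H$ and so that $v$ acquires a neighbor in $F_v$ whose color is unique in $N_H(v)$.

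The technical heart of the proof, and the main obstacle, is the design of the gadget $F_v$ so that the last condition above is always achievable, irrespective of how $c$ colors $N_G(v)$. For $k\ge 4$ a natural candidate is a star consisting of a new vertex $s_v$ adjacent to $v$ and to $k-1$ pendant leaves: the color of $s_v$ can be chosen outside the colors used by $v$'s subdivision neighbors, which after suitable coordination with neighboring gadgets is possible because we have spare colors, making $s_v$ a unique-colored neighbor of $v$. The case $k=3$ is markedly harder because the color of each subdivision vertex $w_{uv}$ is forced to the unique element of $\{1,2,3\}\setminus\{c(u),c(v)\}$, and every gadget neighbor of $v$ can use only one of the two colors in $\{1,2,3\}\setminus\{c(v)\}$; one must either enrich each edge with a longer bipartite subgadget that restores flexibility, or rely on a dedicated reduction in the spirit of the one by Li, Yao, Zhou, and Broersma~\cite{LYZB2009} mentioned in the introduction.
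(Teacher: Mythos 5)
Your plan is sound for $k\ge 5$, where it essentially coincides with the paper's treatment (Lemma~\ref{lem:subdivision_largechromatic}): in $\mathrm{sub}_1(G)$ a subdivision vertex $w_{uv}$ need only avoid at most four colors, namely $c(u)$, $c(v)$, and one reserved ``unique'' color at each endpoint, so with five or more colors a greedy extension succeeds and no vertex gadget $F_v$ is needed at all. Your backward direction via Lemma~\ref{lem:degree2} is also fine. The genuine gaps are exactly at $k=4$ and $k=3$, which is where all the work in the paper lives. For $k=4$, your star gadget requires $w_{uv}$ to avoid $c(u)$, $c(v)$, $c(s_u)$, and $c(s_v)$: that is four constraints on four colors, so there are no ``spare colors,'' and the ``suitable coordination with neighboring gadgets'' you invoke is precisely the missing argument, not a routine detail. (The forward direction can in fact be rescued by a global choice such as $c(s_v):=4$ when $c(v)\neq 4$ and $c(s_v):=1$ otherwise, which collapses the four constraints to at most three on every edge; but you neither state nor verify any such choice.) For $k=3$ the same count shows the gadget approach cannot work at all: forcing $\lvert\{c(u),c(v),c(s_u),c(s_v)\}\rvert\le 2$ on every edge forces $c(s_v)=c(u)$ for every $u\in N_G(v)$, hence every neighborhood of $G$ monochromatic and every component of $G$ bipartite, so the reduction from \textsc{$3$-Coloring} degenerates. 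Your proposal ends there with a disjunction (``either enrich each edge \dots\ or rely on a dedicated reduction''), which is not a proof.

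For comparison, the paper closes these two cases differently. The case $k=3$ is obtained by quoting Li, Yao, Zhou, and Broersma~\cite{LYZB2009} together with the observation that for subcubic graphs ``at least two colors in every neighborhood'' is equivalent to being a PCF $3$-coloring. The case $k=4$ is then handled not by reducing from \textsc{$4$-Coloring} but by reducing from bipartite PCF $3$-coloring via Proposition~\ref{lem:bipartite4}, whose gadget forces all new neighbors of the original vertices to receive one common color, effectively reserving the fourth color so that the PCF certificate must come from within $G$. As written, your argument establishes the theorem only for $k\ge 5$.
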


\begin{THM}\label{thm:mainoddbipartite}
	For every integer $k\geq3$, it is NP-complete to decide whether a graph $G$ admits an odd $k$-coloring, even if $G$ is bipartite.
\end{THM}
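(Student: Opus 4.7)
Both problems lie in $\mathrm{NP}$, so the task is to establish $\mathrm{NP}$-hardness for every $k \geq 3$.

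\emph{Odd coloring (Theorem~\ref{thm:mainoddbipartite}).} The plan is a uniform reduction from the $\mathrm{NP}$-hard problem of deciding $\chi(G_0) \leq k$ on general graphs. Given $G_0$, form $G'$ by attaching a pendant to each vertex of $G_0$ of even degree, so every vertex of $G'$ has odd degree; since adding pendants does not alter the chromatic number when $G_0$ has at least one edge, $\chi(G') = \chi(G_0)$. Let $H = \mathrm{sub}_1(G')$; by construction $H$ is bipartite. I claim $\chi(G_0) \leq k$ if and only if $H$ has an odd $k$-coloring. The backward direction is immediate from Lemma~\ref{lem:degree2}, which applied to each subdivision vertex shows that the two endpoints of every edge in $G'$ receive distinct colors, so $V(G')$ inherits a proper $k$-coloring. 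For the forward direction, take any proper $k$-coloring $c$ of $G'$ and color each subdivision vertex $w_{uv}$ with an arbitrary color in $[k] \setminus \{c(u),c(v)\}$ (at least $k-2 \geq 1$ choices). Then at every subdivision vertex the odd condition holds because its two neighbors have distinct colors, and at every $u \in V(G')$ the multiset of colors on its $\deg_{G'}(u)$ subdivision-neighbors has odd total size, forcing some color to occur an odd number of times.

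\emph{PCF coloring (Theorem~\ref{thm:mainbipartite}).} The base case $k=3$ is covered by the theorem of Li, Yao, Zhou, and Broersma cited in the introduction. For $k \geq 4$, the plan is to reduce from $(k-1)$-coloring of general graphs via the same bipartite skeleton $\mathrm{sub}_1(G')$ augmented with a bipartite \emph{color-reservation gadget} attached to each $u \in V(G')$, designed so that in any PCF $k$-coloring of the augmented graph $H$ the vertex $u$ is forced to avoid one fixed color---say color $k$. Under this setup, given a proper $(k-1)$-coloring $c$ of $G_0$ (hence of $G'$) with colors $\{1,\ldots,k-1\}$, one extends to a PCF $k$-coloring of $H$ by picking an incident edge $e_u$ at each $u$, coloring its subdivision vertex with the reserved color $k$---thereby supplying the uniquely-colored neighbor required at $u$---and coloring the remaining subdivision vertices from $\{1,\ldots,k-1\} \setminus \{c(u),c(v)\}$, which has $k-3 \geq 1$ choices. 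In the reverse direction, Lemma~\ref{lem:degree2} (which applies to PCF since every PCF coloring is an odd coloring) forces $V(G')$ to inherit a proper $k$-coloring, and the gadget further restricts its restriction to $V(G_0)$ to use only colors from $\{1,\ldots,k-1\}$.

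The principal obstacle is designing the bipartite color-reservation gadget. A universal vertex---the obvious tool used in Lemma~\ref{lem:red2}---would trivially accomplish the task but destroys bipartiteness, so the gadget must instead be built from bipartite pieces such as large $K_{s,t}$'s, pendant-rich configurations, or chains of $C_4$'s, arranged so that its own PCF obligations consume color $k$ and forbid every $u \in V(G')$ from receiving it. The case $k=4$ is the most delicate, because the forward extension then has only one color option per non-reserved subdivision vertex, leaving no slack to absorb any misalignment between the gadget's coloring and the coloring of $V(G')$.
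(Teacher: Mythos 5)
Your argument for Theorem~\ref{thm:mainoddbipartite} is correct, and it takes a genuinely different route from the paper. The paper splits into three cases: for $k\ge 5$ it uses Lemma~\ref{lem:subdivision_largechromatic} ($\chi(G)\le\och(\mathrm{sub}_1(G))\le\max(\chi(G),5)$), for $k=3$ it invokes the theorem of Li, Yao, Zhou, and Broersma, and for $k=4$ it builds the gadget of Proposition~\ref{lem:bipartite4} to lift the $k=3$ case. You instead give a single uniform reduction for all $k\ge3$: make every degree odd by attaching pendants to even-degree vertices (as in Lemma~\ref{lem:redodd}) and then take the $1$-subdivision to force bipartiteness. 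The key point that makes this work with only $\chi(G')$ colors rather than the paper's $\max(\chi(G),5)$ is the parity argument at the original vertices --- an odd number of neighbors guarantees some color of odd multiplicity for free --- while Lemma~\ref{lem:degree2} gives the converse; both directions check out, including the degenerate cases (edgeless input, pendant vertices). Your approach buys uniformity and avoids any reliance on the Li--Yao--Zhou--Broersma reduction or the $3\!\to\!4$ gadget for the odd problem; the paper's approach has the advantage that the same machinery (subdivisions plus Proposition~\ref{lem:bipartite4}) simultaneously handles the PCF problem, where no parity shortcut is available. Note that only the odd-coloring portion of your write-up is a complete proof: the PCF portion hinges on a ``color-reservation gadget'' that you have not constructed, but that concerns Theorem~\ref{thm:mainbipartite}, not the statement under review.
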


The following lemma immediately implies Theorems \ref{thm:mainbipartite} and \ref{thm:mainoddbipartite} for $k\ge 5$ because it is NP-complete to decide whether $\chi(G)\le k$ for a graph $G$~\cite{GJ1979}.

\begin{LEM}\label{lem:subdivision_largechromatic}
	For every graph $G$, 
	\[ \chi(G)\le \och(\mathrm{sub}_1(G))\le \pcf(\mathrm{sub}_1(G))\le
	\max(\chi(G),5).\]
\end{LEM}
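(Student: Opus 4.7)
The first two inequalities are Lemma~\ref{lem:subdivision_lowerbound}, so what remains is the upper bound $\pcf(\mathrm{sub}_1(G))\le k$ where $k:=\max(\chi(G),5)\ge 5$. I will fix a proper coloring $c\colon V(G)\to[\chi(G)]\subseteq[k]$ and use it as the coloring of the original vertices of $\mathrm{sub}_1(G)$. For every edge $uv\in E(G)$, the subdivision vertex $w_{uv}$ needs a color $\ell(uv)$ in the list $L_{uv}:=[k]\setminus\{c(u),c(v)\}$ of size $k-2\ge 3$. Any such assignment makes the coloring proper, and because $w_{uv}$ has degree $2$ with neighbors of distinct colors, the PCF condition at every subdivision vertex is automatic; the PCF condition at an original vertex $u$ with $\deg_G(u)\le 1$ is also vacuous. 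So the task reduces to choosing $\ell(uv)\in L_{uv}$ so that, for every $u\in V_{\ge 2}:=\{u:\deg_G(u)\ge 2\}$, some color in $[k]$ appears exactly once in the multiset $\{\ell(uv):v\in N_G(u)\}$.

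My plan is, at each $u\in V_{\ge 2}$, to choose a \emph{witness neighbor} $v_u\in N_G(u)$ and a \emph{witness color} $\alpha_u\in L_{uv_u}$; set $\ell(uv_u):=\alpha_u$ and arrange that $\ell(uv)\ne\alpha_u$ for every $v\in N_G(u)\setminus\{v_u\}$, so that $\alpha_u$ realizes the PCF condition at $u$. Two compatibility conditions on the $\alpha$'s will be required: $\alpha_u=\alpha_v$ whenever $v_u=v$ and $v_v=u$ (since then the single edge $uv$ must carry one value of $\ell$), and $\alpha_u\ne\alpha_v$ whenever $v_u=v\in V_{\ge 2}$ but $v_v\ne u$ (so that $\ell(uv)=\alpha_u$ does not spoil the uniqueness of $\alpha_v$ at $v$).

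The heart of the argument is to show that such compatible $\alpha_u$'s always exist. Pick $v_u\in N_G(u)$ arbitrarily for each $u\in V_{\ge 2}$; the set of mutual pairs $M:=\{\{u,v\}:v_u=v,\ v_v=u\}$ forms a matching on $V_{\ge 2}$. Form the auxiliary graph $\tilde G$ on the equivalence classes $V_{\ge 2}/M$ by drawing an edge between the class of $u$ and the class of $v_u$ whenever $v_u\in V_{\ge 2}$ and $\{u,v_u\}\notin M$. In the functional orientation $u\mapsto v_u$, each class has out-degree $0$ (if mated in $M$) or $1$ (if unmated), so each weakly connected component of $\tilde G$ contains at most one cycle: $\tilde G$ is a pseudoforest, hence $2$-degenerate and therefore $3$-choosable. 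Attach to each class the list $L_{uv_u}$ (well-defined on a mated class because $L_{u\,m(u)}=L_{m(u)\,u}$, where $m(u)$ is the mate of $u$); each list has size $\ge 3$, so a proper list coloring of $\tilde G$ supplies compatible $\alpha_u$'s. Finally, extend $\ell$ to every remaining edge $uv$ by picking any element of $L_{uv}$ avoiding $\alpha_u$ (if $u\in V_{\ge 2}$) and $\alpha_v$ (if $v\in V_{\ge 2}$); this is possible since $|L_{uv}|\ge 3$ and at most two colors are excluded.

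The main obstacle is the structural claim that $\tilde G$ is always a pseudoforest; without this, the list-coloring step with $3$-element lists is not justified. This is precisely where the hypothesis $k\ge 5$ is used, because $3$-choosability requires lists of size $\ge 3$ and lists have size $k-2$. The remaining verifications --- that $M$ is indeed a matching, that $\alpha_u$ ends up unique at each $u\in V_{\ge 2}$, and that the extension step introduces no extra occurrences of $\alpha_u$ --- are routine bookkeeping once the key structural observation is in hand.
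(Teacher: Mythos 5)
Your proof is correct, but it takes a genuinely different route from the paper's. The paper fixes a proper $k$-coloring of the original vertices and then runs a short greedy/maximality argument: take a maximal set $X\supseteq V(G)$ on which a PCF $k$-coloring extends the given one, and observe that an uncolored subdivision vertex can always be added because each of its two neighbors ``uses'' at most two colors (its own color and its designated uniquely-appearing color), so $k\ge 5$ leaves a free color; properness and the PCF condition at the new degree-$2$ vertex come for free since its two neighbors are adjacent in $G$. You instead build the whole coloring in one shot: you designate a witness neighbor and witness color $\alpha_u$ for each vertex of degree at least $2$, encode the compatibility constraints among the witnesses as a proper list coloring of an auxiliary graph on the classes of the mutual-witness matching, and use the out-degree-at-most-one orientation to see that this auxiliary graph is a pseudoforest, hence $2$-degenerate and $3$-choosable, which is exactly what lists of size $k-2\ge 3$ afford. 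Both arguments ultimately rest on the same count ($k\ge 5$ leaving room after excluding the two endpoint colors and the two witness colors), but the paper's iterative extension avoids the auxiliary-graph machinery entirely and is considerably shorter; your global construction is more elaborate than necessary here, though it is sound and would adapt more directly to, say, a list-coloring variant of the statement.
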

\begin{proof}
	By Lemma~\ref{lem:subdivision_lowerbound}, the first two inequalities hold. 
	Let $k:=\max(\chi(G),5)$ and let $H:=\mathrm{sub}_1(G)$.
	We may assume that $G$ has no isolated vertices.
	It remains to show that $\pcf(H)\leq k$.

	Let $c'$ be a proper $k$-coloring of $G$.
	Let $X$ be a maximal subset of $V(H)$ containing $V(G)$ such that 
	there exists a PCF $k$-coloring $c$ of $H[X]$ extending~$c'$.
	We claim that $X=V(H)$. If not, then $H$ has a vertex $v$ of degree $2$ not in $X$. Let $x$, $y$ be the neighbors of $v$. 
	We say a vertex $w$ uses a color $i$ if $c(w)=i$ or $i$ is the color appearing uniquely in the neighbors of $w$ in $H[X]$.
	Then each of $x$ or $y$ uses at most $2$ colors.
	Since $k\ge 5$, there is a color~$i$ not used in $x$ or $y$.
	Then we can extend $c$ to a PCF $k$-coloring of $H[X\cup\{v\}]$ by making $c(v):=i$, contradicting the assumption that $X$ is chosen to be maximal. 
\end{proof}

The theorem of Li, Yao, Zhou, and Broersma~\cite{LYZB2009} implies that it is NP-complete to decide whether a bipartite graph admits a PCF $3$-coloring, proving Theorem~\ref{thm:mainbipartite} for $k=3$.
Their proof also implies that it is NP-complete to decide whether a bipartite graph admits an odd $3$-coloring, proving Theorem~\ref{thm:mainoddbipartite} for $k=3$.
The following proposition states that the problem of deciding $\pcf(G)\le3$ or $\och(G)\le 3$
can be reduced to the problem of deciding $\pcf(G)\le 4$ or $\och(G)\le 4$, respectively, 
proving the theorems for $k=4$.

\begin{PROP}\label{lem:bipartite4}
	For a bipartite graph $G$, one can construct a bipartite graph~$\widetilde{G}$ in polynomial time such that 
	\begin{itemize}
		\item $\pcf(G)\leq3$ if and only if $\pcf(\widetilde{G})\leq4$, and 
		\item $\och(G)\leq3$ if and only if $\och(\widetilde{G})\leq4$.
	\end{itemize}
\end{PROP}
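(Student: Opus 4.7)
My plan is to construct $\widetilde G$ as $G$ together with a carefully designed bipartite gadget that forces, in any PCF (respectively odd) $4$-coloring of $\widetilde G$, all vertices of $V(G)$ to avoid one common color; after relabeling that color as $4$, the restriction of the $4$-coloring to $V(G)$ is a PCF (resp.\ odd) $3$-coloring of $G$, and the forward direction is handled by extending a $3$-coloring of $G$ to the gadget.

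The central tool is Lemma~\ref{lem:degree2}: in every odd (and hence every PCF) coloring, the two neighbors of a degree-$2$ vertex receive distinct colors. Since bipartiteness forbids same-side edges, this lemma lets me impose \emph{virtual inequalities} between vertices on the same side of the bipartition $(A,B)$ of $G$ by routing them through degree-$2$ vertices on the opposite side. Concretely, I would introduce an anchor vertex $x$ on the $A$-side and an anchor vertex $y$ on the $B$-side, and for each $v\in V(G)$ a new degree-$2$ intermediary $p_v$ on the side opposite to $v$, adjacent to $v$ and to the anchor sharing $v$'s side ($x$ if $v\in A$ and $y$ if $v\in B$). Lemma~\ref{lem:degree2} then forces $x$'s color to differ from every color used on $A$, and $y$'s color to differ from every color used on $B$. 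To ensure that the two forbidden colors coincide, I would then attach a bipartite sub-gadget between $x$ and $y$ that propagates the equality $c(x)=c(y)$ under every PCF/odd $4$-coloring; relabeling this common color to $4$ yields $V(G)\subseteq\{1,2,3\}$.

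For the forward direction, I would take a PCF (odd) $3$-coloring of $G$ in the colors $\{1,2,3\}$, assign color $4$ to both anchors and to a suitable ``skeleton'' of the linking sub-gadget, and then pick each $c(p_v)\in\{1,2,3\}\setminus\{c(v)\}$ so as to provide a unique (odd) color in the neighborhoods of both $v$ and the adjacent anchor. For the backward direction, Lemma~\ref{lem:degree2} combined with the linking sub-gadget forces $V(G)\subseteq\{1,2,3\}$, and I would verify that the induced coloring on $V(G)$ is still PCF (odd) on $G$, using that each $v\in V(G)$ has only one extra neighbor in $\widetilde G$ (namely its pendant $p_v$), so a PCF/odd witness in $G$-neighbors can be arranged.

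The main obstacle is the design of the bipartite linking sub-gadget that forces $c(x)=c(y)$: a direct edge $xy$ would force the opposite inequality, and triangles and universal vertices are unavailable. I expect to build this sub-gadget by combining several degree-$2$ intermediaries with the PCF/odd condition at a dense auxiliary vertex, arranged so that any $4$-coloring satisfying the resulting local constraints must assign the same color to $x$ and $y$. A secondary but manageable issue is that a single construction must work for both PCF and odd $4$-colorings; this is possible because Lemma~\ref{lem:degree2} is already stated for odd colorings---the weaker of the two notions---so the same degree-$2$ inequality constraints are available in both settings, and the PCF condition at each anchor can be checked by the same argument.
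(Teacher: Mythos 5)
Your high-level strategy is the same as the paper's: use Lemma~\ref{lem:degree2} to force every vertex of $G$ to avoid one common color in any PCF/odd $4$-coloring of $\widetilde{G}$, so that the restriction to $V(G)$ is a $3$-coloring. However, there are two genuine gaps. First, and most seriously, attaching only \emph{one} new neighbor $p_v$ to each $v\in V(G)$ breaks the backward direction: the vertex witnessing the PCF condition for $v$ in $\widetilde{G}$ could be $p_v$ itself (its color may appear exactly once in $N_{\widetilde{G}}(v)$ while no color is unique in $N_G(v)$), and likewise a single extra neighbor flips the parity of one color class, so the odd condition in $\widetilde{G}$ does not transfer to $G$. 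Your remark that ``a PCF/odd witness in $G$-neighbors can be arranged'' is exactly the step that fails. The paper avoids this by giving each vertex of $G$ exactly \emph{two} new neighbors that are forced to receive the same color; a doubled color can never be a unique-color witness and does not change any parity, so the witness must lie in $N_G(v)$.

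Second, the linking sub-gadget that forces $c(x)=c(y)$ --- which you yourself identify as the main obstacle --- is never constructed, so the argument is incomplete at its crux. For the record, the paper's mechanism is: each $a_i$ is joined by length-$2$ paths to all three vertices of a (subdivided) triangle $\alpha_1\alpha_2\alpha_3$, so by Lemma~\ref{lem:degree2} the four colors of $a_i,\alpha_1,\alpha_2,\alpha_3$ are pairwise distinct, pinning every $a_i$ to the unique fourth color; the same holds for the $b_j$'s relative to $\beta_1\beta_2\beta_3$; and three direct edges $\alpha_\ell b_\ell$ force the common $b$-color to differ from all three distinct $\alpha$-colors, hence to coincide with the common $a$-color. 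You would also need to verify, as the paper does via an explicit PCF $4$-coloring of $\mathrm{sub}_1(K_4)$ (Lemma~\ref{lem_complete_subdivision}), that the gadget's internal vertices --- in particular the high-degree anchors --- can themselves be given conflict-free witnesses in the forward direction; with many intermediaries $p_v$ attached to a single anchor this is not automatic.
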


To prove Proposition~\ref{lem:bipartite4}, we will use the following lemma.

\begin{LEM}\label{lem_complete_subdivision}
	Let $G$ be the $1$-subdivision of $K_4$, and $x$ be a degree-$3$ vertex of $G$. 
	Then $G$ has a PCF $4$-coloring~$c$ such that 
	\begin{enumerate}[(a)]
		\item all neighbors of $x$ have distinct colors,
		\item each degree-$3$ vertex $y\neq x$ has a neighbor $z$ non-adjacent to $x$
			such that $c(z)\neq c(x)$ and the color of $z$ is unique in the colors of neighbors of $y$.
	\end{enumerate}
\end{LEM}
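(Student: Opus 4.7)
The plan is to exhibit an explicit PCF $4$-coloring $c$ of $G=\mathrm{sub}_1(K_4)$ and verify the two conditions. Label the four degree-$3$ vertices by $x,y_1,y_2,y_3$, the subdivision vertex on edge $xy_i$ by $a_i$, and the subdivision vertex on edge $y_iy_j$ by $b_{ij}$. Every degree-$2$ vertex has neighbors of distinct colors in any PCF coloring, so I only need to worry about the PCF condition at the four degree-$3$ vertices.

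First I would set $c(x)=1$ and $c(a_1)=2,\ c(a_2)=3,\ c(a_3)=4$; this immediately yields condition (a) and also makes the unique color at $x$'s neighborhood split three ways. Since $a_i$ has only the two neighbors $x$ and $y_i$, the PCF requirement at $a_i$ reduces to $c(y_i)\neq c(x)=1$; combined with $c(y_i)\neq c(a_i)$, this forces $c(y_i)\in\{2,3,4\}\setminus\{c(a_i)\}$. I would make the cyclic choice $c(y_1)=3,\ c(y_2)=4,\ c(y_3)=2$.

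For the remaining three subdivision vertices, each $b_{ij}$ must satisfy $c(b_{ij})\notin\{c(y_i),c(y_j)\}$, leaving two options. I would choose $c(b_{12})=2,\ c(b_{13})=4,\ c(b_{23})=3$; note each such color lies in $\{2,3,4\}$, so automatically $c(b_{ij})\neq c(x)$. Then at $y_1$ the neighbors $a_1,b_{12},b_{13}$ have colors $2,2,4$, so the unique color comes from $b_{13}$; at $y_2$ the neighbors $a_2,b_{12},b_{23}$ have colors $3,2,3$, giving the unique color at $b_{12}$; and at $y_3$ the neighbors $a_3,b_{13},b_{23}$ have colors $4,4,3$, giving the unique color at $b_{23}$. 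In every case the witnessing vertex is a $b$-vertex (hence non-adjacent to $x$) whose color differs from $c(x)=1$, so condition (b) holds.

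I do not expect a genuine obstacle here — the difficulty is purely combinatorial bookkeeping. The only subtle point is making sure that the PCF witness for each $y_i$ is not the $a_i$-vertex (which would violate (b) because $a_i$ is adjacent to $x$); this is arranged by forcing the $b_{ij}$ colors to lie in $\{c(a_i),c(a_j)\}$ rather than introducing the otherwise free color $1$, so that the uniquely colored neighbor of $y_i$ always ends up among the $b$'s.
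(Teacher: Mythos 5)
Your proof is correct and takes essentially the same approach as the paper: both exhibit one explicit PCF $4$-coloring of $\mathrm{sub}_1(K_4)$ and verify conditions (a) and (b) by direct inspection. Your particular coloring checks out at every vertex, and your verification of (b) is actually more explicit than the paper's, which merely asserts that its displayed coloring is "a desired PCF $4$-coloring."
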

\begin{proof}
	Let $v_1,\ldots,v_4:=x$ be the degree-$3$ vertices of $G$ and for $1\le i<j\le 4$, let $s_{ij}$ be the degree-$2$ vertex adjacent to both $v_i$ and $v_j$.
	For each vertex $v$ of $G$, let 
	\[
		c(v):=
		\begin{cases}
			1	& \text{if $v\in\{v_1,s_{34}\}$},\\
			2	& \text{if $v\in\{v_2,s_{13},s_{14}\}$},\\
			3	& \text{if $v\in\{v_3,s_{12},s_{24}\}$},\\
			4	& \text{if $v\in\{v_4,s_{23}\}$}. 
		\end{cases} 
	\]
	Then $c$ is a desired PCF $4$-coloring of $G$.
\end{proof}

We now prove Proposition~\ref{lem:bipartite4}.

\begin{proof}[Proof of Proposition~\ref{lem:bipartite4}]
	Let $(A, B)$ be a bipartition of~$G$.
	If $\abs{V(G)}\leq3$, then we can take $\widetilde{G}:=G$.
	Thus, we may assume that $\abs{V(G)}>3$.
	By symmetry, we may further assume that $\abs{B}\geq 2$.

	For positive integers $n$ and $m$, let $G_{n,m}$ be the graph whose vertex set is 
	$\{ a_1,a_2,\ldots,a_{2n}, \alpha_1,\alpha_2,\alpha_3,
	b_1,b_2,\ldots,b_{2m},\beta_1,\beta_2,\beta_3\}$
	such that 
	\begin{enumerate}[(i)]
		\item	for each $1\le i\le 2n$, $a_i$ is adjacent to all of $\alpha_1$, $\alpha_2$, and $\alpha_3$, 
		\item	for each $1\le j\le 2m$, $b_j$ is adjacent to all of $\beta_1$, $\beta_2$, and $\beta_3$, and 
		\item	both $\{\alpha_1,\alpha_2,\alpha_3\}$ and $\{\beta_1,\beta_2,\beta_3\}$ are cliques.
	\end{enumerate}
	Let $\widetilde{G}$ be the graph obtained from the disjoint union of $G$ and $\mathrm{sub}_1(G_{\abs{A},\abs{B}})$
	by the following operations. 
	\begin{itemize}
		\item For each $1\le i\le \abs{A}$, we add two edges from the $i$-th vertex of $A$ to $a_{2i-1}$ and $a_{2i}$.
		\item For each $1\le j\le \abs{B}$, we add two edges from the $j$-th vertex of $B$ to $b_{2j-1}$ and $b_{2j}$.
		\item We add three edges $\alpha_1b_1$, $\alpha_2b_2$, and $\alpha_3b_3$.
	\end{itemize}
	See Figure~\ref{fig:example} for an illustration.
	Note that $\widetilde{G}$ is bipartite and can be constructed in polynomial time.

	\begin{figure}\label{fig:example}
		\centering
		\tikzstyle{v}=[circle, draw, solid, fill=black, inner sep=0pt, minimum width=3pt]
		\begin{tikzpicture}[scale=0.5]
			\draw (-5,2) arc (120:60:14);
			\draw (-5,2) arc (120:60:10);
			\draw (-5,2) node[v,label=$v$](v){};
			\draw (5,2) node[v,label=$u$](u){};
			\draw (9,2) node[v,label=$w$](w){};
			\draw (-6,0) node[v,label=left:$a_1$](a1){};
			\draw (-4,0) node[v,label=right:$a_2$](a2){};
			\draw (4,0) node[v,label=right:$b_1$](b1){};
			\draw (6,0) node[v,label=right:$b_2$](b2){};
			\draw (8,0) node[v,label=right:$b_3$](b3){};
			\draw (10,0) node[v,label=right:$b_4$](b4){};
			\draw (a1)--(v)--(a2);
			\draw (b1)--(u)--(b2);
			\draw (b3)--(w)--(b4);
			\draw[dash pattern=on 2mm off 0.5mm] (b1)--(7,-5)--(b2);
			\draw[dash pattern=on 2mm off 0.5mm] (b1)--(5,-6.5)--(b2);
			\draw[dash pattern=on 2mm off 0.5mm] (b1)--(9,-6.5)--(b2);
			\draw[dash pattern=on 2mm off 0.5mm] (b3)--(7,-5)--(b4);
			\draw[dash pattern=on 2mm off 0.5mm] (b3)--(5,-6.5)--(b4);
			\draw[dash pattern=on 2mm off 0.5mm] (b3)--(9,-6.5)--(b4);
			
			\draw (-5,-4.5) node[v,label=below:$\alpha_1$](aa1){};
			\draw (-7,-6.5) node[v,label=left:$\alpha_2$](aa2){};
			\draw (-3,-6.5) node[v,label=right:$\alpha_3$](aa3){};
			\draw (7,-5) node[v,label=below:$\beta_1$](bb1){};
			\draw (5,-6.5) node[v,label=left:$\beta_2$](bb2){};
			\draw (9,-6.5) node[v,label=right:$\beta_3$](bb3){};
			\draw[dash pattern=on 2mm off 0.5mm] (aa1)--(aa2)--(aa3)--(aa1);
			\draw[dash pattern=on 2mm off 0.5mm] (bb1)--(bb2)--(bb3)--(bb1);
			\draw[dash pattern=on 2mm off 0.5mm] (a1)--(aa1)--(a2);
			\draw[dash pattern=on 2mm off 0.5mm] (a1)--(aa2)--(a2);
			\draw[dash pattern=on 2mm off 0.5mm] (a1)--(aa3)--(a2);
			\draw (aa1)--(b1);
			\draw (aa2)--(b2);
			\draw (aa3)--(b3);
		\end{tikzpicture}
		\caption{The graph $\widetilde{P}$ for a path $P:=uvw$, where dashed lines are length-$2$ paths.}
		\label{fig:path2}
	\end{figure}
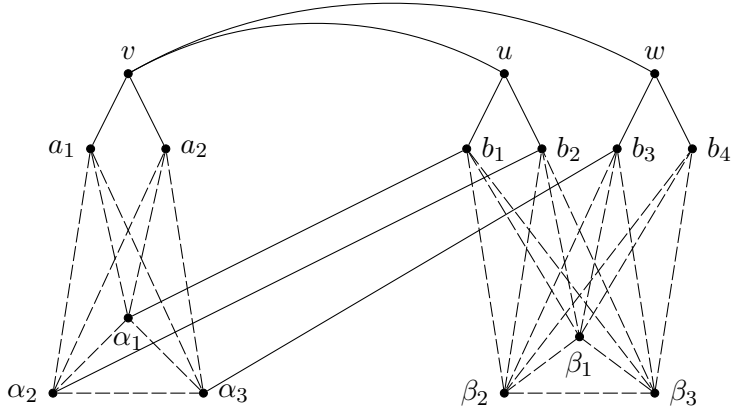

	First, let us show that if $\pcf(\widetilde{G})\le 4$, then $\pcf(G)\le 3$.
	Suppose that $\widetilde{G}$ has a PCF $4$-coloring $c$.
	We show that the restriction $c'$ of $c$ on $V(G)$ is a PCF $3$-coloring of~$G$.
	For each $i\in[2\abs{A}]$, by Lemma~\ref{lem:degree2}, $c(\alpha_1)$, $c(\alpha_2)$, $c(\alpha_3)$, and $c(a_i)$ are pairwise distinct, and therefore $c(a_1)=c(a_2)=\cdots=c(a_{2\abs{A}})$.
	Similarly, $c(b_1)=c(b_2)=\cdots=c(a_{2\abs{B}})$.
	For each $\ell\in[3]$, since $\alpha_\ell$ and $b_\ell$ are adjacent in $\widetilde{G}$, we have that $c(\alpha_\ell)\neq c(b_\ell)$.
	Hence, 
	\[ 
		c(a_1)=c(a_2)=\cdots=c(a_{2\abs{A}})=c(b_1)=c(b_2)=\cdots=c(a_{2\abs{B}}).
	\] 
	Thus, $c'$ uses at most $3$ colors.
	For each vertex $v$ of~$G$, $N_{\widetilde{G}}(v)\setminus N_G(v)$ is exactly the set of two vertices $w$ and $w'$ with $c(w)=c(w')$.
	Thus, if $v$ has a neighbor $u$ in $\widetilde{G}$ such that $c(u)$ is unique in $N_{\widetilde{G}}(v)$, then $u\in V(G)$ and $c(u)$ is unique in $N_G(v)$.
	Hence, $c'$ is a PCF $3$-coloring of~$G$.
	Similarly, if $\och(\widetilde{G})\le 4$, then $\och(G)\le 3$.
 
	Now, let us show that if $\pcf(G)\le 3$, then $\pcf(\widetilde{G})\le 4$.
	Suppose that $G$ has a PCF $3$-coloring $c$.
	Let $c^*$ be a PCF $4$-coloring of $\mathrm{sub}_1(G_{\abs{A},\abs{B}})$
	obtained from Lemma~\ref{lem_complete_subdivision}
	such that all vertices in $\{a_1,a_2,\ldots,a_{2\abs{A}},b_1,b_2,\ldots,b_{2\abs{B}}\}$ have the same color by playing the role of $x$ in Lemma~\ref{lem_complete_subdivision}.
	We may assume that $c^*(a_1)=4$.

	For a vertex $v$ of $\widetilde{G}$, let $c'(v)= c(v)$ if $v\in V(G)$, and $c'(v)=c^*(v)$ if $v\in V(\mathrm{sub}_1(G_{\abs{A},\abs{B}}))$.
	We claim that $c'$ is a PCF $4$-coloring of $\widetilde{G}$.

	For each vertex $v$ of~$G$, all vertices in $N_{\widetilde{G}}(v)\setminus V(G)$ are colored by $4$
	and so it is easy to observe that $c'$ is a proper coloring.

	If $v\in V(G)$, then there are precisely two neighbors of $v$ not in $G$, both having the color $4$. 
	Since $c$ is a PCF $3$-coloring of $G$, 
	there is a color in $\{1,2,3\}$ appearing exactly once in the neighborhood of $v$ in $\widetilde{G}$.

	By Lemma~\ref{lem_complete_subdivision}(b), each vertex $v$ in $\{\alpha_1,\alpha_2,\alpha_3,\beta_1,\beta_2,\beta_3\}$ has a neighbor~$w$ non-adjacent to any vertex in $\{a_1,a_2,\ldots,a_{2\abs{A}},b_1,b_2,\ldots,b_{2\abs{B}}\}$ such
	that $c^*(w)\neq 4$ and the color of $w$ is unique in the neighborhood of~$v$ in $\mathrm{sub}_1(G_{\abs{A},\abs{B}})$. 
	This implies that such a vertex $v$ still satisfies the conflict-free condition in $\widetilde{G}$, because 
	$b_1$, $b_2$, $b_3$ have the color $4$.

	If $v$ is a degree-$2$ vertex of $\mathrm{sub}_1(G_{\abs{A},\abs{B}})$, then trivially $v$ satisfies the conflict-free condition in $\widetilde{G}$.
	If $v\in \{a_1,a_2,\ldots,a_{2\abs{A}},b_1,b_2,\ldots,b_{2\abs{B}}\}$, 
	then by Lemma~\ref{lem_complete_subdivision}(a),
	all three colors $1$, $2$, and $3$ appear in the neighborhood of~$v$ in $\mathrm{sub}_1(G_{\abs{A},\abs{B}})$.
	Since the degree of $v$ in $\widetilde{G}$ is at most two plus the degree of $v$ in $G_{\abs{A},\abs{B}}$, at least one of the colors $1$, $2$, and $3$ appears exactly once in the neighborhood of $v$ in $\widetilde{G}$.
	Hence, $c'$ is a PCF $4$-coloring of $\widetilde{G}$.
	Similarly, if $\och(G)\le 3$, then $\och(\widetilde{G})\le 4$.
\end{proof}

\section{NP-completeness on planar graphs}\label{sec:planar}

We now turn our attention to planar graphs.

\begin{THM}\label{thm:mainplanar}
	It is NP-complete to decide whether a graph $G$ admits a PCF $4$-coloring, even if $G$ is planar.
\end{THM}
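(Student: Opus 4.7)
\emph{Proof proposal.}
Membership in NP is clear: given a candidate coloring, one can verify in polynomial time that it is proper and that every non-isolated vertex has a neighbor whose color is unique in its neighborhood. For NP-hardness, the plan is to reduce from \textsc{Planar $3$-Coloring}, which is NP-complete~\cite{GJ1979}. Given a planar graph~$G$, I would construct in polynomial time a planar graph $G^{*}$ with $\pcf(G^{*})\le 4$ if and only if $\chi(G)\le 3$.

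The starting point is the $1$-subdivision $\mathrm{sub}_1(G)$, which remains planar and, by Lemma~\ref{lem:degree2}, forces the restriction of any PCF $4$-coloring to $V(G)$ to be a proper coloring. Unlike the bipartite setting of Section~\ref{sec:bip}, here the Four Color Theorem guarantees $\chi(G)\le 4$ for every planar $G$, so subdivision alone cannot separate $3$-colorable planar graphs from those with $\chi(G)=4$, and Lemma~\ref{lem:subdivision_largechromatic} only gives the bound $\pcf(\mathrm{sub}_1(G))\le 5$. To cut the palette at $V(G)$ down to three colors, I would attach to each $v\in V(G)$ a small planar \emph{forcing gadget} $H_v$, embedded inside a face incident with~$v$ so that planarity is preserved. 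The gadget should satisfy (a)~\emph{rigidity}: in every PCF $4$-coloring of $G^{*}$, some specific color (say colour~$4$) is forbidden at~$v$; and (b)~\emph{flexibility}: for any of the remaining three colors at~$v$ and any admissible coloring of the subdivision-vertex neighbors of~$v$, the gadget $H_v$ extends to a PCF $4$-coloring of its interior that preserves the PCF condition at~$v$. Given such a gadget, the reduction is completed in both directions: any PCF $4$-coloring of $G^{*}$ restricts to a proper $3$-coloring of~$G$; and conversely, starting from a proper $3$-coloring of~$G$ one extends it by coloring the subdivision vertices (using the forbidden color) and the gadget interiors, in the same spirit as the greedy extension in the proof of Lemma~\ref{lem:subdivision_largechromatic} and the gluing argument in the proof of Proposition~\ref{lem:bipartite4}.

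The principal obstacle is the design of the planar gadget $H_v$. Property~(a) requires an exhaustive case analysis of the possible PCF $4$-colorings of~$H_v$ to rule out the forbidden color at~$v$, and a careful use of the PCF uniqueness condition at internal vertices of~$H_v$ to make the case analysis close; property~(b) must accommodate arbitrary colors assigned by the rest of $G^{*}$ at the interface. A further subtlety, already visible in Proposition~\ref{lem:bipartite4}, is that after attaching the gadget, the uniquely-colored neighbor at~$v$ could be either a gadget vertex or a subdivision vertex, and one must guarantee that such a neighbor always exists without conflict across the interface. I expect the bulk of the proof to consist of the explicit construction of~$H_v$ and the verification of properties~(a) and~(b).
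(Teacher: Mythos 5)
There is a genuine gap, and it is not only that the gadget $H_v$ is left unconstructed. Your architecture---an independent forcing gadget attached to each vertex $v$---cannot deliver the backward direction of the reduction as described. A local planar gadget cannot forbid ``colour $4$'' at $v$ in any absolute sense, since the colour classes of a PCF $4$-coloring can be permuted; the most a gadget of this type can enforce is that $v$ avoids whichever colour the gadget's interface vertices happen to receive (e.g.\ by forcing two neighbors of $v$ inside the gadget to share a colour). Call that colour $\xi(v)$. If $\xi$ is allowed to vary from vertex to vertex, then a PCF $4$-coloring of $G^{*}$ only restricts to a proper $4$-coloring of $G$ in which each vertex avoids \emph{some} colour---which is no constraint at all---so you cannot conclude $\chi(G)\le 3$. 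Any correct proof must include a mechanism that forces the forbidden colour to be globally consistent, and your proposal does not mention this issue.

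This consistency problem is exactly where the paper spends most of its effort. The paper reduces from PCF $3$-coloring on $2$-connected planar graphs (via Li--Yao--Zhou--Broersma) rather than from planar $3$-colorability, and in Proposition~\ref{prop:planar4} it attaches a ``tent'' to every \emph{face} rather than a gadget to every vertex. The tent forces all attachment vertices $v_{4i-2}^f, v_{4i}^f$ of a face $f$ to share one colour $\xi(f)$, and then a separate argument---using the PCF condition at each vertex $v$ of $G$ together with $2$-connectedness (so that faces around $v$ are glued along edges)---shows that two incident faces with $\xi(f_1)\neq\xi(f_2)$ would force two neighbors of $v$ in $G$ to repeat a colour and destroy the PCF condition at $v$; hence $\xi$ is constant and one colour is globally unused on $V(G)$. (The bipartite construction of Proposition~\ref{lem:bipartite4} solves the same problem differently, by wiring all per-vertex attachment points to a single shared hub $G_{\abs{A},\abs{B}}$.) To repair your plan you would need either to link your gadgets $H_v$ into one connected structure that synchronizes the forbidden colour, or to prove a propagation argument of the paper's kind; without one of these the reduction fails even if a gadget with your properties (a) and (b) exists.
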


The theorem of Li, Yao, Zhou, and Broersma~\cite{LYZB2009} implies that it is NP-complete to decide whether a $2$-connected planar graph admits a PCF $3$-coloring.
The following proposition combined with their theorem 
immediately implies Theorem~\ref{thm:mainplanar}.

For a plane graph $G$ and a face $f$ bounded by a cycle~$C$ of length~$k$,
attaching a \emph{tent} to $f$ is an operation to create a plane graph from $G$ by 
\begin{itemize}
	\item adding a cycle $v_1^fv_2^f\cdots v_{4k+2}^\ell v_1^f$ of length $4k+2$ inside $f$,
	\item attaching a pendant vertex $\ell_i^f$ to $v_i^f$ for each $i\in [4k+2]$,
	\item adding a \emph{center} $v^f$ and making it adjacent to all vertices in the cycle $v_1^fv_2^f\cdots v_{4k+2}^\ell v_1^f$, 
	\item adding a vertex $w^f$ inside the triangle $v^f v_1^f v_{4k+2}^f v^f$ and making it adjacent to all three vertices of the triangle, and
	\item making the $i$-th vertex of $C$ adjacent to both $v_{4i-2}^f$ and $v_{4i}^f$ for each $i\in[k]$.
\end{itemize}
See Figure~\ref{fig:Fgraph} for an illustration.
If we fix an ordering of the vertices of each face boundary of $G$, then the plane graph obtained from $G$ by attaching a tent to every face is unique up to isomorphism.

\begin{figure}
	\centering
	\tikzstyle{v}=[circle, draw, solid, fill=black, inner sep=0pt, minimum width=3pt]
	\begin{tikzpicture}[scale=0.9]
		\draw (0,0) circle (4);
		\draw (0,0) circle (2);
		\draw (90:4) node[v,label={$u_1$}](u1){};
		\draw (90-24*4:4) node[v,label=right:$u_2$](u2){};
		\draw (90-24*8:4) node[v,label=below:$u_3$](u3){};
		\draw (90-24*0:2) node[v,label={[xshift=2.5mm,yshift=-5.5mm]$v_3$}](v3){};
		\draw (90-24*1:2) node[v,label={[xshift=1.8mm,yshift=-6mm]$v_4$}](v4){};
		\draw (90-24*2:2) node[v,label={[xshift=0.3mm,yshift=-6.5mm]$v_5$}](v5){};
		\draw (90-24*3:2) node[v,label={[xshift=-1.5mm,yshift=-6mm]$v_6$}](v6){};
		\draw (90-24*4:2) node[v,label={[xshift=-2.5mm,yshift=-5mm]$v_7$}](v7){};
		\draw (90-24*5:2) node[v,label={[xshift=-3mm,yshift=-3.8mm]$v_8$}](v8){};
		\draw (90-24*6:2) node[v,label={[xshift=-3.3mm,yshift=-2.5mm]$v_9$}](v9){};
		\draw (90-24*7:2) node[v,label={[xshift=-3.4mm,yshift=-1.1mm]$v_{10}$}](v10){};
		\draw (90-24*8:2) node[v,label={[xshift=-2.3mm,yshift=-0.4mm]$v_{11}$}](v11){};
		\draw (90-24*9:2) node[v,label={[xshift=-0.8mm,yshift=0.5mm]$v_{12}$}](v12){};
		\draw (90-24*10:2) node[v,label={[xshift=1.5mm,yshift=0.6mm]$v_{13}$}](v13){};
		\draw (90-24*11:2) node[v,label={[xshift=-2.9mm,yshift=-0.4mm]$v_{14}$}](v14){};
		\draw (90-24*12:1.2) node[v,label={[xshift=-3mm,yshift=-1.5mm]$w$}](w){};
		\draw (90-24*13:2) node[v,label={[xshift=-3.2mm,yshift=-3.5mm]$v_1$}](v1){};
		\draw (90-24*14:2) node[v,label={[xshift=3.3mm,yshift=-4.45mm]$v_2$}](v2){};
		
		\draw (90-24*0:2.5) node[v,label=above:{$\ell_3$}](l3){};
		\draw (90-24*1:2.5) node[v,label={$\ell_4$}](l4){};
		\draw (90-24*2:2.5) node[v,label={$\ell_5$}](l5){};
		\draw (90-24*3:2.5) node[v,label=right:{$\ell_6$}](l6){};
		\draw (90-24*4:2.5) node[v,label=right:{$\ell_7$}](l7){};
		\draw (90-24*5:2.5) node[v,label=right:{$\ell_8$}](l8){};
		\draw (90-24*6:2.5) node[v,label=right:$\ell_9$](l9){};
		\draw (90-24*7:2.5) node[v,label=right:$\ell_{10}$](l10){};
		\draw (90-24*8:2.5) node[v,label=below:$\ell_{11}$](l11){};
		\draw (90-24*9:2.5) node[v,label=left:{$\ell_{12}$}](l12){};
		\draw (90-24*10:2.5) node[v,label=left:{$\ell_{13}$}](l13){};
		\draw (90-24*11:2.5) node[v,label=left:{$\ell_{14}$}](l14){};
		\draw (90-24*13:2.5) node[v,label=above:{$\ell_1$}](l1){};
		\draw (90-24*14:2.5) node[v,label={$\ell_2$}](l2){};
		
		\draw (0,0) node[v](v){};
		\foreach \x in {1,2,3,4,5,6,7,8,9,10,11,12,13,14}{
			\draw (v)--(v\x);
			\draw (v\x)--(l\x);
		}
		\draw (v14)--(w)--(v);
		\draw (w)--(v1);
		\draw (v2)--(u1)--(v4);
		\draw (v6)--(u2)--(v8);
		\draw (v10)--(u3)--(v12);
	\end{tikzpicture}
	\caption{Attaching a tent to a face bounded by a cycle $u_1u_2u_3u_1$.}
	\label{fig:Fgraph}
\end{figure}

\begin{PROP}\label{prop:planar4}
	For a $2$-connected plane graph $G$, let $\widetilde{G}$
	be a plane graph obtained from $G$ by attaching a tent to every face.
	Then $\pcf(G) \leq 3$ if and only if $\pcf(\widetilde{G}) \leq 4$.
\end{PROP}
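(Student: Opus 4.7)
I would prove the two implications of the equivalence separately; the hard one will be the ``if'' direction.

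\emph{Forward direction.} Starting from a PCF $3$-coloring $c:V(G)\to\{1,2,3\}$, the plan is to extend $c$ to a coloring $\widetilde{c}$ of $\widetilde{G}$ by coloring each tent in a uniform way. For a face $f$ whose boundary cycle has length $k$, I would set $\widetilde{c}(v^f):=1$, $\widetilde{c}(w^f):=3$, $\widetilde{c}(v_1^f):=2$, $\widetilde{c}(v_i^f):=4$ for every even $i\in[4k+2]$, and $\widetilde{c}(v_i^f):=3$ for every odd $i\in\{3,5,\dots,4k+1\}$. This is a proper coloring of the tent and puts all four colors on the $K_4$ induced by $\{v^f,w^f,v_1^f,v_{4k+2}^f\}$; in particular every attachment vertex $v_{4p-2}^f,v_{4p}^f$ receives color $4$. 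Each pendant color $\widetilde{c}(\ell_i^f)$ would then be chosen to supply a PCF witness at $v_i^f$; the only non-trivial case would be at the attachment vertices, whose remaining neighborhood colors read $\{3,3,1,c(u_p)\}$, and a brief case-check over $c(u_p)\in\{1,2,3\}$ should show that a valid pendant color in $\{1,2,3\}$ always exists. The PCF witness at $v^f$ would be color $2$, appearing only at $v_1^f$; at $w^f$ it is immediate since its three neighbors have pairwise distinct colors; at each $G$-vertex $u_j$ all four tent-neighbors are colored $4\notin\{1,2,3\}$, so the PCF witness provided by $c$ in $N_G(u_j)$ remains unique in $N_{\widetilde{G}}(u_j)$.

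\emph{Backward direction.} Given a PCF $4$-coloring $\widetilde{c}$ of $\widetilde{G}$, I would first observe that for each face $f$ the $K_4$ on $\{v^f,w^f,v_1^f,v_{4k+2}^f\}$ forces those four vertices to carry all four colors, and that every cycle vertex of the tent lies in the three colors different from $\widetilde{c}(v^f)$. The core of the proof will be a rigidity claim: up to a global permutation of $\{1,2,3,4\}$, every attachment vertex $v_{4p-2}^f,v_{4p}^f$ of every face carries a common color, say $4$. The plan to establish this is to combine the following local PCF constraints: the degree-$4$ ``gap'' vertex $v_{4p-1}^f$ (between the two attachments of $u_p$) has neighborhood $\{v_{4p-2}^f,v_{4p}^f,v^f,\ell_{4p-1}^f\}$, so PCF forbids the two attachments sharing one color while simultaneously $\widetilde{c}(\ell_{4p-1}^f)=\widetilde{c}(v^f)$ (otherwise the four neighbor colors form two pairs with no unique color); an analogous constraint applies at each ``between-$u$'' vertex $v_{4p+1}^f$; the degree-$5$ attachment vertices themselves must avoid a three-two-zero multiplicity pattern in their neighborhoods; and the high-degree center $v^f$ of degree $4k+3$ must admit a color appearing exactly once in its neighborhood. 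Once the rigidity is in hand, every $G$-vertex $u_j$ would satisfy $\widetilde{c}(u_j)\neq 4$ and all four tent-neighbors of $u_j$ would carry color $4$, so the PCF witness in $N_{\widetilde{G}}(u_j)$ must be a $G$-neighbor whose color is unique in $N_G(u_j)$; hence $\widetilde{c}|_{V(G)}$ would be a PCF $3$-coloring of $G$.

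\emph{Main obstacle.} The crux will be the rigidity claim. PCF at $v_{4p-1}^f$ is not enough on its own, as the free pendant $\ell_{4p-1}^f$ can always be colored $\widetilde{c}(v^f)$ to absorb the local constraint; one would have to simultaneously use PCF at the attachment vertices and at the high-degree center $v^f$, and exploit the parity $4k+2\equiv 2\pmod 4$ of the tent cycle at the $K_4$ joint so that the cyclic pattern of attachment colors is forced into a single value.
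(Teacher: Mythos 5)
Your forward direction is correct and matches the paper's in spirit: color the tent so that the $K_4$ on $\{v^f,w^f,v_1^f,v_{4k+2}^f\}$ sees all four colors and every attachment vertex gets color $4$; the paper simply fixes all pendants to a single color ($c'(\ell_i^f)=2$) rather than choosing them adaptively, and your case-check at the attachment vertices goes through. The backward direction, however, has a genuine gap, and you have correctly located it yourself: the rigidity claim is asserted but not proved, and the strategy you sketch for it cannot work as stated. Your claim is \emph{global} (one common attachment color across all faces), but every constraint you propose to combine---PCF at the gap vertices $v_{4p-1}^f$, at the attachment vertices, and at the center $v^f$---is internal to a single tent. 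Tents of distinct faces are vertex-disjoint and interact only through $G$, so no amount of tent-local reasoning can force $\xi(f)=\xi(f')$ for two different faces; without that, the restriction of $\widetilde{c}$ to $V(G)$ could a priori use all four colors.

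The per-face statement (each face $f$ has a well-defined attachment color $\xi(f)$) is provable along lines close to yours, but the decisive constraint is PCF at the center alone: since the $K_4$ carries all four colors, the color appearing exactly once in $N(v^f)$ must be the color of exactly one of $w^f,v_1^f,v_{4k+2}^f$, so the path $v_2^f\cdots v_{4k+1}^f$ has only two colors available and, being connected and bipartite, is forced to alternate; all even-indexed (hence all attachment) vertices agree. No mod-$4$ parity of the cycle length is needed. The missing ingredient is the cross-face argument: for each vertex $v$ of $G$, each incident face contributes two tent-neighbors of color $\xi(f)$, so $\xi(f)\notin\{c(v),i\}$ where $i$ is the unique color in $N_{\widetilde{G}}(v)$, leaving at most two possible values of $\xi$ around $v$; if both occurred, then along every edge $vw$ separating faces with distinct $\xi$-values the color $c(w)$ is forced to be the fourth color, and since the $\xi$-value must change an even number of times (at least twice) in the cyclic order around $v$, that fourth color appears at least twice among the $G$-neighbors, killing the unique color at $v$. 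This forces $\xi$ to be constant around each vertex and hence, by connectivity, globally---which is exactly the planarity-dependent step your proposal omits.
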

\begin{proof}
	Suppose that~$G$ has a PCF $3$-coloring $c$.
	For a vertex $v$ of $\widetilde{G}$, let 
	\[
		c'(v):=
		\begin{cases}
			c(v)	& \text{if $v\in V(G)$},\\
			1	& \text{if $v=v^f$ for a face $f$},\\
			2	& \text{if $v=w^f$ or $v=\ell_i^f$ for a face $f$ and an integer $i$},\\
			3	& \text{if $v=v^f_{i}$ for a face $f$ and an odd integer $i$},\\
			4	&\text{if $v=v^f_{i}$ for a face $f$ and an even integer $i$}.
		\end{cases}
	\]
	We claim that $c'$ is a PCF $4$-coloring of $\widetilde{G}$.
	It is easy to see that $c'$ is a proper $4$-coloring.
	
	If $v$ is a vertex of~$G$,
	then every vertex in $N_{\widetilde{G}}(v)\setminus N_G(v)$ is colored by $4$, and therefore there is a color in $\{1,2,3\}$ appearing uniquely in the neighborhood of $v$ in $\widetilde{G}$ because $c$ is a PCF $3$-coloring of $G$.

	Since the neighborhood of $w^f$ or $\ell_i^f$ is a clique, every color in the neighborhood of $w^f$ or $\ell_i^f$ appears uniquely. 
	In the neighborhood of $v^f$, the color~$2$ of $w^f$ appears uniquely.

	If $v_i^f$ is adjacent to a vertex $v$ in $G$, then the color $1$ of $v^f$ or the color~$2$ of~$\ell_i^f$ appears uniquely in the neighborhood of $v_i^f$, because $v$ is the only vertex of $G$ adjacent to $v_i^f$.

	If $v_i^f$ is non-adjacent to any vertex of $G$, then the color $1$ of $v^f$ appears uniquely in the neighborhood of $v_i^f$.
	Therefore, $c'$ is a PCF $4$-coloring of~$\widetilde{G}$, proving the claim.
	
	Conversely, suppose that $\widetilde{G}$ has a PCF $4$-coloring $c$.
	We are going to show that the restriction $c'$ of $c$ on $V(G)$ is a PCF $3$-coloring of~$G$.

	First we claim that for every face $f$ of $G$, there is a color $\xi(f)$ such that if $v_i^f$ is adjacent to some vertex of $G$, then $c(v_i^f)=\xi(f)$.
	Let $f$ be a face and $k$ be the length of the boundary of $f$.
	Since $\{v^f,w^f,v_1^f, v_{4k+2}^f\}$ is a clique, all four colors appear inside this clique.
	Since $c$ is a PCF $4$-coloring, there is a color $i$ of some vertex in $\{w^f,v_1^f, v_{4k+2}^f\}$ appearing uniquely in the neighborhood of $v^f$.
	This implies that all vertices $v_i^f$ for $2\le i\le 4k+1$ avoid the color $i$ and the color of $v^f$.
	Since those vertices induce a connected bipartite graph and we only have two available colors, there is a unique bipartition, and therefore all vertices in $\{v_2^f,v_4^f,v_6^f,\ldots, v_{4k}^f\}$ have the same color, say $\xi(f)$.
	This proves the claim.

	Note that if a vertex $v$ of~$G$ incident with a face $f$, then $v$ has two neighbors in $\widetilde{G}$ inside $f$ and they have the same color $\xi(f)$.
	Since $c$ is a PCF $4$-coloring of $\widetilde{G}$, there must be a color appearing uniquely in the neighborhood of $v$ in $\widetilde{G}$, and therefore $c'$ is a PCF $4$-coloring of $G$.
	So, it remains to show that $c$ avoids one color.
	
	For that, we will show that 
	$\xi$ is a constant function. 
	It suffices to show that for each vertex $v$ of~$G$, all faces incident with $v$ have the same value of~$\xi$.
	Let $v$ be a vertex of $G$.
	Since there is a color $i$ appearing uniquely in the neighborhood of $v$ in $\widetilde{G}$, 
	for every face $f$ incident with $v$, 
	we have $\xi(f)\notin\{i, c(v)\}$.
	So, there are at most two distinct values of $\xi$ among faces incident with $v$.
	
	If $e=vw$ is an edge incident with faces $f_1$ and $f_2$ such that $\xi(f_1)\neq \xi(f_2)$, then 
	$c(w)$ is the unique element of $[4]\setminus \{ c(v), \xi(f_1),\xi(f_2)\}$
	because $c$ is a proper $4$-coloring of $\widetilde{G}$.
	So, if there are two distinct faces $f_1$ and $f_2$ incident with $v$ 
	such that $\xi(f_1)\neq \xi(f_2)$, 
	then there are two edges $vw_1$ and $vw_2$ incident with faces of distinct values of $\xi$ and so 
	$c(w_1)=c(w_2)$.
	This implies that no color appears uniquely in the neighborhood of $v$ in $\widetilde{G}$, contradicting the assumption that $c$ is a PCF $4$-coloring of $\widetilde{G}$.
	Therefore, $\xi$ is a constant function and this completes the proof.
\end{proof}

\section{Conclusion}\label{sec:end}
We showed that for each $k\geq3$, it is NP-complete to decide whether $\pcf(G)\le k$ and to decide whether $\och(G)\le k$ for bipartite graphs $G$, and also showed that it is NP-complete to decide whether $\pcf(G)\le 4$ for a planar graph $G$. 

On the other hand, it is straightfoward to see that for each fixed $k$ and~$t$, deciding whether $\pcf(G)\le k$ or whether $\och(G)\le k$ can be done in time $O(n^3)$ for an $n$-vertex graph of clique-width at most $t$.
This is because those problems are expressible in the counting monadic second-order logic, which can be decided for graphs of bounded clique-width (or rank-width); see \cite[Proposition 5.7]{CO2008}.

Fabrici, Lu\v{z}ar, Rindo\v{s}ov\'a, and Sot\'{a}k~\cite{Fabrici2022} showed that every planar graph admits a PCF $8$-coloring and conjectured that every planar graph admits a PCF $6$-coloring.
They presented a planar graph with the PCF chromatic number $6$.
Petru\v{s}evski and \v{S}krekovski \cite{Petrusevski2021} conjectured that every planar graph admits an odd $5$-coloring.
As related questions, we propose the following problems.

\begin{PROB}
	Determine the computational complexity of deciding whether a planar graph admits a PCF $k$-coloring for $k\in\{5,6,7\}$.
\end{PROB}

\begin{PROB}
	Determine the computational complexity of deciding whether a planar graph admits an odd $k$-coloring for $k\in\{4,5,6,7\}$.
\end{PROB}

\providecommand{\bysame}{\leavevmode\hbox to3em{\hrulefill}\thinspace}
\providecommand{\MR}{\relax\ifhmode\unskip\space\fi MR }
% \MRhref is called by the amsart/book/proc definition of \MR.
\providecommand{\MRhref}[2]{%
  \href{http://www.ams.org/mathscinet-getitem?mr=#1}{#2}
}
\providecommand{\href}[2]{#2}

\end{document}